\newcommand{\techRep}{false} %% switch here between true and false
\newcommand{\iftechrep}{\ifthenelse{\equal{\techRep}{true}}}
\newcommand{\N}{\mathbb{N}}
\newcommand{\q}[1]{\overrightarrow{\langle\textup{\textsf{#1}}\rangle}}
\newcommand{\qr}[1]{\langle \textup{\textsf{#1}} \rangle_\bullet}
\newcommand{\ql}[1]{{}_\bullet\langle \textup{\textsf{#1}} \rangle}
\newcommand{\Att}{\mathit{Att}}
\newcommand{\Def}{\mathit{Def}}
\newcommand{\Tow}{\mathsf{Tower}}
\newcommand{\Act}{\mathcal{A}\mathit{ct}}
\renewcommand{\P}{\mathcal{P}}
\renewcommand{\S}{\mathcal{S}}
\newcommand{\val}{\mathit{val}}
\newcommand{\T}{\mathcal{T}}
\newcommand{\Bis}[1]%{\mathit{Bis}\left(#1\right)}
{\mathord{\sim}#1}
\newcommand{\shuf}{|\!|}
\newcommand{\sta}{$\beta$}
\newcommand{\defeq}{\stackrel{\text{def}}{=}}
\newcommand{\problemx}[3]{
\par\noindent\underline{\sc#1}\par\nobreak\vskip.2\baselineskip
\begingroup\clubpenalty10000\widowpenalty10000
\setbox0\hbox{\bf INPUT:\ \ }\setbox1\hbox{\bf QUESTION:\ \ }
\dimen0=\wd0\ifnum\wd1>\dimen0\dimen0=\wd1\fi
\vskip-\parskip\noindent
\hbox to\dimen0{\box0\hfil}\hangindent\dimen0\hangafter1\ignorespaces#2\par
\vskip-\parskip\noindent
\hbox to\dimen0{\box1\hfil}\hangindent\dimen0\hangafter1\ignorespaces#3\par
\endgroup}
\newcommand{\problemy}[3]{
\par\noindent\underline{\sc#1}\par\nobreak\vskip.2\baselineskip
\begingroup\clubpenalty10000\widowpenalty10000
\setbox0\hbox{\bf INPUT: }\setbox1\hbox{\bf OUTPUT: }
\dimen0=\wd0\ifnum\wd1>\dimen0\dimen0=\wd1\fi
\vskip-\parskip\noindent
\hbox to\dimen0{\box0\hfil}\hangindent\dimen0\hangafter1\ignorespaces#2\par
\vskip-\parskip\noindent
\hbox to\dimen0{\box1\hfil}\hangindent\dimen0\hangafter1\ignorespaces#3\par
\endgroup}
\title{Bisimilarity of Pushdown Systems is Nonelementary}%\thanks{This
\author{Michael Benedikt\inst{1} \and Stefan G\"oller\inst{2} \and Stefan Kiefer\inst{1} \and Andrzej S.~Murawski\inst{3}}
\institute{University of Oxford, UK\\
\and LIAFA Paris, France and Universit\"at Bremen, Germany
\thanks{The research leading to these results has received funding from the European Union's Seventh
Framework Programme (FP7/2007-2013) under grant agreement n$^\circ$ 259454.}\\
\and University of Leicester, UK}
\begin{document}

\maketitle

\begin{abstract}
 Given two pushdown systems, the bisimilarity problem asks whether they are  bisimilar.
While this problem is known to be decidable our main result states that it is nonelementary, 
improving $\mathsf{EXPTIME}$-hardness, which was the previously best known lower bound for this problem.
Our lower bound result holds for normed pushdown systems as well.
\end{abstract}

\section{Introduction} \label{sec-introduction}

A central problem in theoretical computer science is to decide
whether two machines or systems behave equivalently.
While being generally undecidable for Turing machines,
a lot of research has been devoted to find subclassses of machine devices
for which this problem becomes decidable.
{\em Equivalence checking} is the problem of determining whether two systems are semantically identical.

It is well-known that even language equivalence of pushdown automata
is undecidable, in fact already their universality is undecidable.
On the positive side, a celebrated result due to S\'{e}nizergues states that 
language equivalence of deterministic pushdown automata is decidable \cite{Senizergues01}.
The best known upper bound for the latter problem is a tower of exponentials
 \cite{Stir02} (see \cite{Jan12} for a more recent proof), while only
hardness of deterministic polynomial time is known to date.

Among the numerous notions of equivalence \cite{vGlab90} 
in the realm of formal verification and concurrency theory,
the central one is 
{\em bisimulation equivalence} ({\em bisimilarity} for short),
which enjoys pleasant mathematical properties.
It can be seen to take the king role:
There are important characterizations the bisimulation-invariant fragments
of first-order logic and of monadic second-order logic
in terms of modal logic \cite{vBen76} and of the modal $\mu$-calculus \cite{JaWa96}, respectively.
In particular, bisimilarity is a fundamental notion for process algebraic formalisms~\cite{Milner}.
As a result, a great deal of research in the analysis of infinite-state systems 
 (such as pushdown systems or Petri nets)
 has been devoted to deciding bisimilarity of two given processes,
  see e.g.~\cite{KuceraJ06} for a comprehensive overview.

A milestone result in this context has been proven by S\'{e}nizergues:
Bisimilarity on pushdown systems (i.e. transition systems induced by pushdown automata without $\varepsilon$-transitions)
 is decidable \cite{Senizergues05} 
; in fact, in \cite{Senizergues05} bisimilarity is proven to be decidable for the
more general class of equational graphs of finite out-degree.
Since pushdown systems can be viewed as an abstraction of the call-and-return behavior of a
recursive program, the latter decidability result should be read as that one can decide equivalence
of recursive programs in terms of their visible behavior.
Concerning {\em decidability} the latter result can in some sense be considered as best possible
since on the slightly more general classes of 
type -1a and type -1b rewrite systems \cite{JS08} and order-two pushdown graphs \cite{BG12}
bisimilarity becomes undecidable.

Though being decidable, S{\'e}nizergues' algorithm for deciding bisimilarity of pushdown systems consists of
two semi-decision procedures and in fact no complexity-theoretic upper bound
is known for this problem to date.
On the other hand, the best known lower bound for this problem is $\mathsf{EXPTIME}$ 
 shown by Ku\v{c}era and Mayr \cite{KM10}.
In \cite{Kie12} $\mathsf{EXPTIME}$-hardness has been even be established even for
the subclass basic process algebras,
for which a $2\mathsf{EXPTIME}$ upper bound is known \cite{Burkart95}.
Such complexity gaps are typical in the context of infinite-state systems.

In fact, in case decidability is known, 
the precise {\em computational complexity} status of bisimilarity on
infinite-state systems is known only for few classes, let us mention
basic parallel processes (communication-free Petri nets) \cite{Jan03} and
 one-counter systems (the transition systems induced by pushdown automata
over a singleton stack alphabet) \cite{BGJ10}.

{\bf Our contribution.} 
The main result of this paper states that bisimilarity of
(systems induced by) pushdown systems is nonelementary, even in the normed case.
We give small descriptions of pushdown systems on which a bisimulation game is
implemented that
allows to push and verify encodings of nonelementarily big counters \`{a} la Stockmeyer
\cite{Sto74}.
 As an important technical tool we realize deterministic verification phases in the bisimulation
game by simulating non-erasing real-time transducers that are fed with the stack content.
As basic gadgets, we use the well-established technique of Defender's Forcing \cite{JS08}.
We are optimistic that our technique gives new insights for potential
further lower bounds for
bisimilarity of PA processes, regularity for pushdown systems, and
weak bisimilarity of basic process algebras.

{\bf Organisation.} 

In Section \ref{sec-preliminaries} we introduce preliminaries.
In Section \ref{sec-techniques} we recall basics on transductions, introduce useful abbreviations for 
pushdown rules and recall Defender's forcing.
Section \ref{sec-construction} consists of our nonelementary lower bound proof for bisimilarity
of pushdown systems.

% !TEX root =  main-lncs.tex

\section{Preliminaries} \label{sec-preliminaries}
By $\N\defeq\{0,1,\ldots\}$ we denote the set of {\em non-negative integers}.
For $n,m \in \N$ with we write $[n,m]$ for $\{n, n+1, \ldots, m\}$;
in particular note that $[n,m]=\emptyset$ if $n>m$.

A {\em labelled transition system (LTS)}  is a tuple
$\S=(S,\Act,\{\mathord{\xrightarrow{a}}\mid a\in\Act\})$,
where $S$ is a set of {\em configurations},
$\Act$ is a finite set of {\em action labels},
and $\mathord{\xrightarrow{a}}\subseteq S\times S$ is
a {\em transition relation} for each $a\in\Act$.
We say that a state $s\in S$ is a {\em deadlock} if there is no $t\in S$
and no $a\in\Act$
such that $s\xrightarrow{a}t$.
A binary relation $R\subseteq S\times S$ is a {\em bisimulation}
if for each $(s,s')\in R$ and each $a\in\Act$,
we have: (1) if $s\xrightarrow{a}t$, then there is some
$s'\xrightarrow{a}t'$ with $(t,t')\in R$
and, conversely,
(2) if $s'\xrightarrow{a}t'$, then there is some
$s\xrightarrow{a}t$ with $(t,t')\in R$.
We write $s\sim t$ is there is some bisimulation
$R$ with $(s,t)\in R$. 
Although not explicitly used in this paper, it is sometimes convenient to view bisimilarity
as a game between Attacker and Defender. 
In every round of the game, there is a pebble placed on a unique state in
each transition system. Attacker then chooses one
transition system and moves the pebble from the pebbled state to one
of its successors by an action $\xrightarrow{a}$, where $a$ is some action label.
Defender must imitate this by moving the pebbled state from the other
system to one of its successors by the same action $\xrightarrow{a}$.
 If one player cannot move, then the other player wins. 
Defender wins every infinite game. 
Two states $s$ and $t$ are bisimilar if and only if Defender
has a winning strategy on the game with initial
pebble configuration $(s,t)$.

%A {\em type 0 rewrite system}
%is a tuple $\mathcal{R}=(\Gamma,\Act,\Delta)$,
%where $\Gamma$ is a finite {\em alphabet},
%$\Act$ is a finite set of {\em action labels},
%and
%$\Delta=\{u_i\btran{a_i}v_i\mid i\in I\}$
%is a finite set of {\em rewrite rules},
%where $u_i\in\Gamma^+$, $v_i\in\Gamma^*$, and $a_i\in\Act$,
%for each index $i\in I$.
%It induces a LTS $\S(\mathcal{R}):=(\Gamma^*,\Act,\{\xrightarrow{a}\mid a\in\Act\})$,
%where
%$
%\xrightarrow{a}:=\bigcup\{(uw,vw)\mid w\in\Gamma^*, (u,a,v)\in\Delta\}
%$
%for each $a\in\Act$.

%A \emph{pushdown automaton (PDA)} is a tuple $\P=(Q,\Gamma,\Act,\mathord{\btran{}})$,
% where $Q$ is a finite set of \emph{control states},
% $\Gamma$ is a finite set of \emph{stack symbols},
% $\Act$ is a finite set of \emph{actions},
% and $\mathord{\btran{}} \subseteq Q \times \Gamma \times \Act \times Q \times \Gamma^{\le 2}$ is a finite set of \emph{transition rules},
%  where $\Gamma^{\le 2} := \{\varepsilon\} \cup \Gamma \cup \Gamma^2$.
%We write $q X \btran{a} q' w$ to mean $(q,X,a,q',w) \in \mathord{\btran{}}$.
%A PDA induces an LTS $\S(\P) := (Q \times \Gamma^*,\Act,\{\mathord{\xrightarrow{a}} \mid a\in\Act\})$, where
%$
%\mathord{\xrightarrow{a}} := \bigcup \{(q a x, q' w x) \mid x \in \Gamma^*, \ q a \btran{a} q' w \}
%$
%for each $a \in \Act$.

A \emph{pushdown automaton (PDA)} is a tuple $\P=(Q,\Gamma,\Act,\mathord{\btran{}})$,
 where $Q$ is a finite set of \emph{control states},
 $\Gamma$ is a finite set of \emph{stack symbols},
 $\Act$ is a finite set of \emph{actions},
 and $\mathord{\btran{}} \subseteq
  (Q \times \{\varepsilon\} \times \Act \times Q \times \{\varepsilon\}) \cup
  (Q \times \{\varepsilon\} \times \Act \times Q \times \Gamma) \cup
  (Q \times \Gamma \times \Act \times Q \times \{\varepsilon\})$ is a finite set of \emph{internal rules}, \emph{push rules}, and \emph{pop rules},
   respectively.
The {\em size} of $\P$ is defined as $|\P|\defeq|\Gamma|+|\Act|+|\mathord{\btran{}}|$.
We write $q v \btran{a} q' w$ to mean $(q,v,a,q',w) \in \mathord{\btran{}}$.
Such a PDA $\P$ induces an LTS $\S(\P) \defeq (Q \times \Gamma^*,\Act,\{\mathord{\xrightarrow{a}} \mid a\in\Act\})$, where
$
\mathord{\xrightarrow{a}} \defeq \bigcup_{x \in \Gamma^*} \{(q v x, q' w x) \mid q v \btran{a} q' w \}
$
for each $a \in \Act$
We will abbreviate each configuration $(q,w)$ in $\S(\P)$ by $qw$; in particular  the configuration
$(q, \epsilon)$ will be denoted by just $q$.

Given a PDA $\P = (Q,\Gamma,\Act,\mathord{\btran{}})$,  $q_1, q_2 \in Q$ and $w_1,w_2\in \Gamma^\ast$
 the \emph{PDA bisimilarity problem} asks whether
$q_1 w_1 \sim q_2 w_2$ holds in $\S(\P)$.
In this paper we prove the following theorem:
\begin{theorem} \label{thm-main}
 PDA bisimilarity is nonelementary.
\end{theorem}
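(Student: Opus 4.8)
The plan is to prove nonelementary hardness by a reduction from the acceptance problem of a Turing machine working in space bounded by $\Tow(n)$, the function that for a parameter $n$ returns a tower of exponentials of height $n$; since this problem is nonelementary, a reduction producing PDA bisimilarity instances of size polynomial in $n$ and in the machine's description transfers the lower bound. First I would fix the recursive Stockmeyer encoding of large counters on the pushdown stack: a level-$0$ counter is a single bit, and a level-$(k+1)$ counter is a word of $\Tow(k)$ bits in which the bit at position $i$ carries, as a tag, a level-$k$ counter holding its address $i \in [0,\Tow(k)-1]$. Such an encoding ranges over values up to $\Tow(k+1)=2^{\Tow(k)}$, so a tape of length $\Tow(n)$ is described by a level-$n$ counter using only polynomially many control states and stack symbols, the stack being used to store and incrementally build these nested encodings.

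The core of the construction is to realise the verification of these encodings inside the bisimulation game so that the two starting configurations are bisimilar precisely when the simulated computation is (non-)accepting. Using the abbreviations for pushdown rules and the Defender's Forcing gadget recalled earlier, I would arrange that whenever a counter is pushed, the Attacker may challenge any claimed local consistency property — for instance that two adjacent level-$k$ address blocks are indeed successive, or that a bit of a level-$(k+1)$ counter carries the correct level-$k$ address — and that Defender's Forcing steers the game into the corresponding subgame. Each such local check must compare two huge sub-blocks, and this comparison is delegated to a \emph{deterministic} verification phase: I would simulate a non-erasing real-time transducer fed with the stack content as it is popped, having two transducer-driven processes run in lock-step so that their bisimilarity holds iff the transducer's output witnesses the intended equality or successor relation. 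Because the transducer is deterministic, the branching-sensitive bisimulation game cannot be exploited by either player during this phase, so the check is faithful; the recursion on $k$ bottoms out at level $0$, where bits are compared directly.

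Assembling these pieces, I would define the PDA so that a single run of the game either exposes a flaw in the encoded computation, letting the Attacker win, or confirms one machine step and descends to verify the next, with Defender's Forcing supplying the existential choices needed both to simulate the machine's transitions and to select which tape position to scrutinise. Some care is required to ensure that every reachable configuration can still reach a deadlock, so that the reduction yields normed instances as claimed: this amounts to arranging that the stack can always be emptied along a terminating branch without handing either player a spurious escape.

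The main obstacle I anticipate is exactly the faithful realisation of the deterministic verification phases within a game that is intrinsically nondeterministic and whose equivalence is branching-sensitive. Getting the Defender's Forcing gadgets to reduce a universally quantified correctness condition (``for every position the local check passes'') to a single transducer comparison, while leaving neither player a profitable deviation, and then composing these comparisons across all $n$ recursion levels so that the total description remains polynomial, is where the delicate engineering lies; guaranteeing simultaneously that the construction is normed imposes a further layer of constraints that the rule gadgets must respect.
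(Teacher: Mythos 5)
Your plan follows the paper's strategy in all essentials: Stockmeyer-style nested counters stored on the stack, Defender's/Attacker's forcing to let the players respectively supply and challenge the bits of a pushed counter, and deterministic verification phases realised by real-time non-erasing transducers reading the popped stack content, so that bisimilarity along the resulting deterministic paths degenerates to equality of transducer outputs. The paper packages the Turing-machine simulation into an intermediate object (a ``deterministically terminating transducer machine'' given by two letter-to-letter transducers $T_1,T_2$ with $T_1(\mathit{enc}(c))=T_2(\mathit{enc}(c'))$ iff $c'$ is the successor configuration of $c$), and the successor test between adjacent counters is likewise a transducer comparison; these are exactly the ``local consistency'' checks you describe.

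The one point where your plan, as stated, would run into trouble is the choice of source problem: you reduce from a machine using space $\Tow(n)$ with the \emph{height of the tower growing with the input}, i.e., $n$ levels of counter nesting in a single reduction. In the natural implementation (and in the paper's), the control states carry a word of pending tasks whose length is bounded by the nesting depth, and the state set has size exponential in that depth; with depth $n$ this is no longer a polynomial-time reduction, which is precisely the ``delicate engineering'' you flag but do not resolve. The paper sidesteps this entirely: it fixes an arbitrary constant $k$ and gives, for each $k$, a polynomial-time reduction from a $k$-EXPSPACE-complete problem using only $k$ levels of nesting (control states are words of length at most $k+2$ over the basic symbols). Hardness for every level of the exponential hierarchy already yields nonelementary hardness, so nothing is lost, and the size blowup disappears. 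You should either adopt this ``for each fixed $k$'' formulation or supply a genuinely different encoding of the pending-task information that stays polynomial at depth $n$; without one of these, the reduction's size bound does not go through.
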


% !TEX root =  main-lncs.tex

\section{Techniques} \label{sec-techniques}
\subsection{Large Counters} \label{sub-counters}

For each $\ell,n \ge 0$ we define $\Tow(\ell,n)$ inductively as $\Tow(0,n)\defeq n$ and
 $\Tow(\ell+1,n)\defeq 2^{\Tow(\ell,n)}$.
Let $\Omega_\ell\defeq \{0_\ell, 1_\ell\}$ be alphabets whose letters have values:
 $\val(0_\ell) = 0$ and $\val(1_\ell) = 1$.
A \emph{$(0,n)$-counter} is a word from $\Omega_0^n$.
The \emph{value} $\val(c)$ of a $(0,n)$-counter $c=\sigma_0 \cdots \sigma_{n-1}$ is defined as
 $\val(c) \defeq \sum_{i=0}^{n-1} 2^i \cdot \val(\sigma_i)$.
So the set of values $\val(c)$ of $(0,n)$-counters equals $[0,2^n-1] = [0,\Tow(1,n)-1]$.
An $(\ell,n)$-counter with $\ell \ge 1$ is a word $c = c_0 \sigma_0 c_1 \sigma_1 \cdots c_m \sigma_m$, where $m = \Tow(\ell,n)-1$,
 each $c_i$ is an $(\ell-1,n)$-counter with $\val(c_i) = i$, and $\sigma_i \in \Omega_\ell$ for $i \in [0,m]$.
We define $\val(c) \defeq \sum_{i=0}^m 2^i \cdot \val(\sigma_i)$.
Observe that $\val(c) \in [0,\Tow(\ell+1,n)-1]$ and the length of each $(\ell,n)$-counter is uniquely determined by $\ell$ and~$n$.
We call an $(\ell,n)$-counter~$c$ \emph{zero} if $\val(c) = 0$,
 and $\emph{ones}$ if $\val(c) = \Tow(\ell+1,n)-1$.
In the following we write $\Omega_{\le \ell}$ for $\bigcup_{i=0}^\ell \Omega_i$.
When $n$ is clear from the context, we may speak of an \emph{$\ell$-counter} to mean an $(\ell,n)$-counter.

\subsection{Transductions}\label{sub-transductions}
A  (real-time and non-erasing) {\em  transducer} is a tuple $T=(Q,q_0,\Sigma,\Upsilon,\delta)$,
where $Q$ is a finite set of {\em states},
$q_0\in Q$ is an {\em initial state},
$\Sigma$ and $\Upsilon$ are finite {\em alphabets},
and
$\delta:Q\times\Sigma\rightarrow Q\times\Upsilon^+$
is a transition function with output.
We say that $T$ is {\em letter-to-letter} if $\delta(q,a)\in Q\times\Upsilon$
for each $q\in Q$ and each $a\in\Sigma$.
We inductively extend $\delta$ to the function $\delta^*:Q\times\Sigma^*\rightarrow Q\times\Upsilon^*$ as follows:
 for each $w\in\Sigma^*$ and $a\in\Sigma$ we set
$\delta^*(q,\varepsilon)\defeq (q,\varepsilon)$ and
$\delta^*(q,a w)\defeq (q'', u v)$ if $\delta(q,a)=(q',u)$ and $\delta^*(q',w)=(q'',v)$.
We define the \emph{transduction} $f_T:\Sigma^*\rightarrow\Upsilon^*$ of $T$ as
$f_T(w)\defeq v$, whenever $\delta^*(q_0,w)=(q,v)$ for some $q\in Q$.
A transduction $f_T: \Sigma^*\rightarrow\Upsilon^*$ is said to be letter-to-letter if $T$ is.
We define the {\em size of $T$} as $|T|\defeq|Q|+|\Sigma|+|\Upsilon|+
\sum\{|w|: q\in Q,a\in\Sigma,\delta(q,a)=(q',w)\}$.

Given two transductions $f_1:\Sigma_1^*\rightarrow\Upsilon^*$ and
$f_2:\Sigma_2^*\rightarrow\Upsilon^*$ with $\Sigma_1\cap\Sigma_2=\emptyset$,
we define their {\em shuffle} as
$f_1 \shuf f_2:(\Sigma_1\cup\Sigma_2)^*\rightarrow \Upsilon^*$
inductively for each $w\in(\Sigma_1\cup\Sigma_2)^*$ and each $a\in(\Sigma_1\cup\Sigma_2)$
as follows:
$f_1 \shuf f_2(\varepsilon)\defeq\varepsilon$ and
$f_1 \shuf f_2(aw)\defeq f_i(a)\cdot (f_1 \shuf f_2(w))$
if $a\in\Sigma_i$, for each $i\in\{1,2\}$.

\iffalse
\begin{proposition}
Assume $f_1:\Sigma_1^*\rightarrow\Upsilon_1^*$ is a transduction
with $f_1=f_{T_1}$ and assume
$f_2:\Sigma_2^*\rightarrow\Upsilon_2^*$ is a transduction
with $f_2=f_{T_2}$
for transducers $T_1$ and $T_2$.
Then one can compute in time
$O(|T_1|\cdot|T_2|)$ a transducer $T$ such that
$f_1 \shuf f_2=f_T$.
\end{proposition}
ALTERNATIVE:
\begin{proposition}
\fi
We note that from two given transducers $T_1, T_2$ with transductions
$f_{T_1}:\Sigma_1^*\rightarrow\Upsilon^*$ and
$f_{T_2}:\Sigma_2^*\rightarrow\Upsilon^*$,
one can compute in time
$O(|T_1|\cdot|T_2|)$ a transducer $T$ such that
$f_T = f_{T_1} \shuf f_{T_2}$.

We note that every non-erasing homomorphism is a transduction (witnessed by a single-state
transducer).
Having a transducer $T$,
we will often write $T$ for $f_T$ without risk of confusion.
For $w\in\Upsilon^*$, we denote by $\Sigma\mapsto w$
the homomorphism $h(a)\defeq w$ for each $a\in\Sigma$.

\subsection{Defender's forcing and Attacker's forcing} \label{sub-def-forcing}

\begin{figure}
%\begin{center}
%\begin{tabular}{c@{\qquad\qquad}c}
\begin{tikzpicture}[xscale=1.6]
\tikzstyle{mystate} = [minimum height=6mm, minimum width=8mm,rounded corners,inner sep=0,draw];
\node[mystate] (s) at  (1,3)    {${}_\bullet s$};
\node[mystate] (s') at (2,3)    {$s_\bullet$};
\node[mystate] (m1) at (0.5,1.5)    {$u_1$};
\node[mystate] (m2) at (1.5,1.5)   {$u_2$};
\node[mystate] (m3) at (2.5,1.5)   {$u_3$};
\node[mystate] (t1) at (0,0)    {${}_\bullet t$};
\node[mystate] (t2) at (1,0)   {$t_\bullet$};
\node[mystate] (t3) at (2,0)    {${}_\bullet t'$};
\node[mystate] (t4) at (3,0)   {$t_\bullet'$};
\draw[->] (s) to node[left] {$a$} (m1);
\draw[->] (s) to node[left] {$a$} (m2);
\draw[->] (s) to node[right,pos=0.15] {$a$} (m3);
\draw[->] (s') to node[right,pos=0.25,xshift=-2] {$a$} (m2);
\draw[->] (s') to node[right,pos=0.25] {$a$} (m3);
\draw[->] (m1) to node[left] {$a$} (t1);
\draw[->] (m1) to node[left,yshift=-1pt] {$b$} (t3);
\draw[->] (m2) to node[left] {$a$} (t1);
\draw[->] (m2) to node[right,pos=0.05,yshift=1] {$b$} (t4);
\draw[->] (m3) to node[left,pos=0.45] {$a$} (t2);
\draw[->] (m3) to node[right,pos=0.3,xshift=-2] {$b$} (t3);
\end{tikzpicture}
\hfill
\begin{tikzpicture}[xscale=1.6]
\tikzstyle{mystate} = [minimum height=6mm, minimum width=8mm,rounded corners,inner sep=0,draw];
\node[mystate] (s) at  (1,3)    {${}_\bullet s$};
\node[mystate] (s') at (2,3)    {$s_\bullet$};
\node[mystate] (t1) at (0,0)    {${}_\bullet t$};
\node[mystate] (t2) at (1,0)   {$t_\bullet$};
\node[mystate] (t3) at (2,0)    {${}_\bullet t'$};
\node[mystate] (t4) at (3,0)   {$t_\bullet'$};
\draw[->] (s) to node[left,pos=0.3] {$a$} (t1);
\draw[->] (s) to node[left,pos=0.3] {$b$} (t3);
\draw[->] (s') to node[right,pos=0.3] {$a$} (t2);
\draw[->] (s') to node[right,pos=0.3] {$b$} (t4);
\end{tikzpicture}
%\\ (a) & (b)
%\end{tabular}
%\end{center}
\caption{(a) Or-gadget (Defender's forcing) \hspace{10mm} (b) And-gadget (Attacker's forcing)}
\label{fig-gadgets}
\end{figure}
For our reduction we will use \emph{Or-gadgets} (``Defender's forcing'') and \emph{And-gadgets} (``Attacker's forcing'')
 to express logical disjunction and conjunction with bisimulation.
More precisely, we have the following lemma.
\begin{lemma}[see e.g.~\cite{JS08}] \label{lem-gadgets}
Consider the states and transitions in Figure~\ref{fig-gadgets} (a) or~(b) as part of an LTS.
The states ${}_\bullet s, s_\bullet$ may have incoming transitions, the states ${}_\bullet t,t_\bullet,{}_\bullet t',t_\bullet',$
 may have outgoing transitions (not shown).
Then we have for the gadgets in Figure~\ref{fig-gadgets}:
\begin{itemize}
 \item[(a)]
  for the Or-gadget:  ${}_\bullet s \sim s_\bullet$ if and only ${}_\bullet t \sim t_\bullet$ or ${}_\bullet t' \sim t_\bullet'$;
 \item[(b)]
  for the And-gadget:  ${}_\bullet s \sim s_\bullet$ if and only ${}_\bullet t \sim t_\bullet$ and ${}_\bullet t' \sim t_\bullet'$.
\end{itemize}
\end{lemma}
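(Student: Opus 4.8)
The plan is to analyse the bisimulation game directly on each gadget. Both transition systems shown are finite, so each equivalence reduces to a case analysis over Attacker's possible moves; the only states whose behaviour is fully specified by the figure are ${}_\bullet s, s_\bullet, u_1, u_2, u_3$, and Attacker's choices from these states are what drive the argument. Note at the outset that incoming transitions into ${}_\bullet s, s_\bullet$ are irrelevant, since ${}_\bullet s \sim s_\bullet$ depends only on the outgoing behaviour of these two states; and the unspecified outgoing transitions of ${}_\bullet t, t_\bullet, {}_\bullet t', t_\bullet'$ are automatically accounted for, because the whole point is to reduce to the sub-equivalences ${}_\bullet t \sim t_\bullet$ and ${}_\bullet t' \sim t_\bullet'$, which already subsume whatever those states do.

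First I would dispatch the And-gadget~(b) as a warm-up. Here ${}_\bullet s$ has the unique $a$-successor ${}_\bullet t$ and the unique $b$-successor ${}_\bullet t'$, and $s_\bullet$ has the unique successors $t_\bullet$ and $t_\bullet'$. Since successors are determined by the action, any bisimulation containing $({}_\bullet s, s_\bullet)$ must relate ${}_\bullet t$ with $t_\bullet$ and ${}_\bullet t'$ with $t_\bullet'$; conversely, taking the union of witnessing bisimulations for these two pairs and adjoining $({}_\bullet s, s_\bullet)$ yields a bisimulation. This gives ${}_\bullet s \sim s_\bullet$ iff ${}_\bullet t \sim t_\bullet$ \emph{and} ${}_\bullet t' \sim t_\bullet'$: Attacker's free choice of $a$ versus $b$ forces Defender to handle both conjuncts.

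For the Or-gadget~(a) the first step is to record the behaviour of the intermediate states. Reading off the transitions, $u_1$ has $a$-successor ${}_\bullet t$ and $b$-successor ${}_\bullet t'$; $u_2$ has $a$-successor ${}_\bullet t$ and $b$-successor $t_\bullet'$; and $u_3$ has $a$-successor $t_\bullet$ and $b$-successor ${}_\bullet t'$. Comparing these pairwise as in the And-gadget, the shared successor in each case cancels and leaves a single residual obligation, so that
\[
u_1 \sim u_2 \iff {}_\bullet t' \sim t_\bullet', \qquad u_1 \sim u_3 \iff {}_\bullet t \sim t_\bullet.
\]
The second step is the game argument. The key asymmetry is that ${}_\bullet s$ has the three $a$-successors $u_1, u_2, u_3$, whereas $s_\bullet$ has only $u_2$ and $u_3$. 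Every move except Attacker playing ${}_\bullet s \xrightarrow{a} u_1$ is answered trivially, by the identical move from the other side reaching a pair of identical states. Thus the game hinges on the move ${}_\bullet s \xrightarrow{a} u_1$, which Defender must answer by $s_\bullet \xrightarrow{a} u_2$ or $s_\bullet \xrightarrow{a} u_3$. By the displayed equivalences, answering with $u_2$ wins precisely when ${}_\bullet t' \sim t_\bullet'$, and answering with $u_3$ wins precisely when ${}_\bullet t \sim t_\bullet$. Since Defender picks the response, Defender survives iff at least one disjunct holds, which is exactly ${}_\bullet s \sim s_\bullet$ iff ${}_\bullet t \sim t_\bullet$ \emph{or} ${}_\bullet t' \sim t_\bullet'$.

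I expect no deep obstacle, as everything is finite; the only point needing care is the explicit ``if'' direction. There I would, assuming say ${}_\bullet t \sim t_\bullet$ via a bisimulation $R_0$, verify that $R \defeq \mathrm{id} \cup R_0 \cup \{({}_\bullet s, s_\bullet), (u_1, u_3)\}$ is a bisimulation: the identity and $R_0$ are bisimulations and their union is one, so it remains to check closure at the two adjoined pairs, where the $a$-move from ${}_\bullet s$ to $u_1$ is matched to $u_3$ and the $a$-move of $u_1$ to ${}_\bullet t$ is matched to $t_\bullet \in R_0$, with all other moves matched identically. The conceptual heart is simply that the single ``extra'' successor $u_1$ of ${}_\bullet s$ is what lets Defender choose the disjunct, turning the gadget into a $\vee$ rather than the $\wedge$ of gadget~(b).
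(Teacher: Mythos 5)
Your proof is correct. The paper itself gives no argument for this lemma beyond the remark that it is easy to verify, deferring to~\cite{JS08}; your analysis --- reducing ${}_\bullet s \sim s_\bullet$ to bisimilarity of successors, computing $u_1 \sim u_2 \iff {}_\bullet t' \sim t_\bullet'$ and $u_1 \sim u_3 \iff {}_\bullet t \sim t_\bullet$, and observing that Defender's choice of reply to the move onto $u_1$ realises the disjunction while the unique-successor structure of gadget~(b) realises the conjunction --- is exactly the standard argument that citation supplies, and your explicit bisimulation witness $\mathrm{id} \cup R_0 \cup \{({}_\bullet s, s_\bullet), (u_1, u_3)\}$ for the ``if'' direction is a sound and welcome addition.
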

The lemma is easy to verify, see e.g.~\cite{JS08}.
In terms of a Defender-Attacker game: In the Or-gadget Defender decides if the play continues in $({}_\bullet t, t_\bullet)$ or in $({}_\bullet t', t_\bullet')$,
 whereas in the And-gadget it is Attacker who decides this.

% !TEX root =  main-lncs.tex
\subsection{Macro Rules}

We will construct a PDA with many control states and rules.
In the interest of succinctness and readability we use \emph{macro rules}
 that compactly represent a set of PDA transition rules with a certain role.
For the rest of the section, fix a PDA $\P=(Q,\Gamma,\Act,\mathord{\btran{}})$ with $a,b \in \Act$.

\subsubsection{Macro rules with one state on the left-hand side.}

For $p,q \in Q$, $\sigma\in\Gamma$ and $a_1\cdots a_\ell\in\Act^\ell$ with $\ell \ge 1$ we write
\[
 p\sigma \btran{a_1 \cdots a_\ell} q
\]
to denote that there are $p_1, \ldots, p_\ell \in Q$ with
$
 p\sigma \btran{a_1} p_1 \btran{a_2} p_2 \cdots \btran{a_\ell} p_\ell = q
$,
and there are no other rules with $p \sigma$ on the left-hand side and no other rules involving $p_1, \ldots, p_{\ell-1}$.

For $p,q \in Q$ and a regular language $L\subseteq\Gamma^*$ and a
transduction $T:\Gamma^*\rightarrow\Act^*$ we write
\[
 p L\btran{T} q
\]
to denote that $\P$ contains control states and rules described below.
%from configuration $p w x$ move deterministically
% from a configuration $p w x$ with $w \in \Gamma^*$:
%(1) if no non-empty prefix of $w$ is in $L$, then deterministically move from $p w$ to
%an empty-stack configuration by
%popping the stack and outputting $aT(w)$,
%(2) if some some non-empty prefix of $w$ is in $L$, then deterministically move from
%$p w$ to $q z$ by popping the stack and outputting $aT(x)a$,
% where $x\in\Gamma^+$ is the smallest non-empty prefix of $x$ in $L$
%and $w=xz$.
These rules make sure that when $\P$ is in a configuration $p y$ for $y \in \Gamma^*$,
 then the shortest prefix $w$ of~$y$ with $w \in L$ will be popped, and $\# T(w) \#$ will be read (where $\# \in \Act$ is a special action symbol),
 and the control state will be changed to~$q$;
if $y$ does not have a prefix~$w$ with $w \in L$, then $y$ will be popped, and $\# T(y)$ will be output.
This behaviour is the result of a product construction between the minimal
 deterministic finite automaton (DFA) accepting~$L$ and the transducer~$T$.
More precisely, let $A=(Q_A,q_{0}^A,\Gamma,F_A,\delta_A)$ be the minimal DFA
that accepts $L$, where
$Q_A$ is the finite set of states,
$q_0^A \in Q_A$ is the initial state,
$F_A \subseteq Q_A$ is the set of final states,
$\delta_A:Q_A\times\Gamma\rightarrow Q_A$
is the transition function. %, and we have that $A$ accepts~$L$.
Assume $T=(Q_T,q_{0}^T,\Gamma,\Act,\delta_T)$.
Then $\P$ contains the control states $Q_A\times Q_T$ and the following rules:
\begin{itemize}
\item $p \btran{\#}(q_{0}^A,q_{0}^T)$;
\item $(q^A,q^{T})\btran{\#} q$
for each $q^A\in F_A$ and each $q^T\in Q_T$;
\item for each $\sigma\in\Gamma$, each $q^A\in Q_A\setminus F_A$ and each $q^T\in Q_T$, where $\delta_T(q^T,\sigma)=(r^T,w)$,
 we have the (macro) rule $(q^A,q^T) \sigma \btran{w} (\delta_A(q^A,\sigma),r^T)$.
\end{itemize}
There are no other rules with $p$ on the left-hand side, and no other rules involving $Q_A\times Q_T$.

If $p L \btran{T} q$ and $w \in L$ but no proper prefix of~$w$ is in~$L$,
 then we have in the LTS $\S(\P)$
\begin{equation}
 \text{for all } x \in \Gamma^* : \quad
 p w x \xrightarrow{\#} s_0 \xrightarrow{a_1} s_1 \xrightarrow{a_2} \ldots \xrightarrow{a_\ell} s_\ell \xrightarrow{\#} q x\,,
 \label{eq-trans}
\end{equation}
where the path is deterministic, $T(w) = a_1 \cdots a_\ell$, and $s_0, \ldots, s_\ell$ are configurations of~$\P$, i.e., states in $\S(\P)$.

%\subsection{Comparing Counters Via Transducers}

We will need the following lemma, which shows how to compare two counters in terms of their images of two given transducers.
For the statement of the lemma, recall the concept of counters and the alphabets $\Omega_\ell$ from Section~\ref{sub-counters} and recall for each alphabet $\Omega$ and each word $w$ we denote by $\Omega\mapsto w$
the homomorphism
that maps every element from $\Omega$ to $w$.
\begin{lemma} \label{lem-transducers}
Let $T_1, T_2 : \Omega_{\le \ell+1}^* \to \Act^*$ be letter-to-letter transducers for some $\ell \ge 0$.
Let $\P=(Q,\Gamma,\Act,\mathord{\btran{}})$ be a PDA with
$\{
{}_\bullet p,p_\bullet,
{}_\bullet q,q_\bullet,
{}_\bullet r,r_\bullet\}\subseteq Q$,
 $\Omega_{\le \ell+1}\subseteq\Gamma$ and the following macro rules:
\begin{align*}
  {}_\bullet q \left( \Omega_{\le \ell-1}^*\cdot \Omega_\ell \right)^*\cdot \Omega_{\ell+1}
          & \btran{%\stackrel{\text{\scriptsize \mbox{$\Omega_{\le \ell-1} \mapsto a$}}}{\Omega_\ell \cup \Omega_{\ell+1} \mapsto T_1} }
              T_1} {}_\bullet r
             && \text{apply $T_1$} \\
  q_\bullet \left( \Omega_{\le \ell-1}^*\cdot \Omega_\ell \right)^*\cdot \Omega_{\ell+1}
          & \btran{%\stackrel{\text{\scriptsize \mbox{$\Omega_{\le \ell-1} \mapsto a$}}}{\Omega_\ell \cup \Omega_{\ell+1} \mapsto T_2} }
              T_2} r_\bullet
             && \text{apply $T_2$} \\
  {}_\bullet p\ \Omega_{\le \ell}^*\cdot \Omega_{\ell+1}\cdot
                        \Omega_{\le \ell}^*\cdot \Omega_{\ell+1}
                      & \btran{\Gamma\mapsto a} {}_\bullet q  && \begin{aligned} & \text{pop two $\ell$-counters} \\
                                                                          & \text{and two $\Omega_{\ell+1}$-symbols} \end{aligned} \\
  p_\bullet\  \Omega_{\le \ell}^*\cdot \Omega_{\ell+1}
                      & \btran{\Gamma \mapsto aa} q_\bullet && \begin{aligned} & \text{pop one $\ell$-counter} \\
                                                                          & \text{and one $\Omega_{\ell+1}$-symbol} \end{aligned} \\
\end{align*}
%Assume that there are no other rules with $\ql{q}, \qr{q}, \ql{q'}, \qr{q'}$ on the left-hand side.
Let $\sigma_1, \sigma_2, \sigma_3 \in \Omega_{\ell+1}$,
and let $w_1, w_2, w_3$ be $\ell$-counters.
\begin{itemize}
 \item[(a)]
  Assume $x_1, x_2 \in \Gamma^*$ such that ${}_\bullet r x_1 \sim r_\bullet x_2$.
  Then
  \[
   {}_\bullet q w_1 \sigma_1 x_1 \sim q_\bullet w_2 \sigma_2 x_2 \ \Longleftrightarrow \
   T_1(w_1 \sigma_1) = T_2(w_2 \sigma_2)\,.
  \]
 \item[(b)]
  Assume $x \in \Gamma^*$ such that ${}_\bullet r x \sim r_\bullet w_1 \sigma_1 x$.
  Then
  \[
   {}_\bullet p w_3 \sigma_3 w_2 \sigma_2 w_1 \sigma_1 x \sim p_\bullet w_3 \sigma_3 w_2 \sigma_2 w_1 \sigma_1 x \ \Longleftrightarrow \
    T_1(w_1 \sigma_1) = T_2(w_2 \sigma_2)\,.
  \]
\end{itemize}
\end{lemma}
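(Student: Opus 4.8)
The plan is to reduce, in both parts, the bisimilarity of the two given configurations to the equality of the two transducer outputs, exploiting that each macro rule induces a \emph{deterministic} computation in $\S(\P)$, as recorded in~(\ref{eq-trans}). Call a configuration \emph{forced} if it has exactly one outgoing transition. The key auxiliary fact I would isolate first is the following: if $u_0 \xrightarrow{c_1} u_1 \xrightarrow{c_2} \cdots \xrightarrow{c_k} u_k$ and $v_0 \xrightarrow{d_1} v_1 \xrightarrow{d_2} \cdots \xrightarrow{d_k} v_k$ are paths of the same length $k$ whose non-final configurations $u_0,\dots,u_{k-1}$ and $v_0,\dots,v_{k-1}$ are all forced, then $u_0 \sim v_0$ if and only if $c_i = d_i$ for all $i$ and $u_k \sim v_k$. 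For the ``$\Leftarrow$'' direction one checks that $\{(u_i,v_i) : 0 \le i \le k\}$ together with a bisimulation witnessing $u_k \sim v_k$ is a bisimulation; for ``$\Rightarrow$'' one argues by induction that at each forced pair $(u_i,v_i)$ the unique Attacker move $u_i \xrightarrow{c_{i+1}} u_{i+1}$ can only be answered by the unique move $v_i \xrightarrow{d_{i+1}} v_{i+1}$, which forces $c_{i+1}=d_{i+1}$ and $u_{i+1} \sim v_{i+1}$.

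For part~(a), I would first determine which prefixes the two macro rules strip off. Since $w_1$ is an $\ell$-counter it lies in $(\Omega_{\le\ell-1}^*\cdot\Omega_\ell)^*$ and contains no symbol of $\Omega_{\ell+1}$, so $w_1\sigma_1$ is the shortest prefix of $w_1\sigma_1 x_1$ belonging to $(\Omega_{\le\ell-1}^*\cdot\Omega_\ell)^*\cdot\Omega_{\ell+1}$; by~(\ref{eq-trans}) the rule applying $T_1$ therefore produces the deterministic path ${}_\bullet q\,w_1\sigma_1 x_1 \xrightarrow{\#}\cdots\xrightarrow{\#} {}_\bullet r\,x_1$ with label word $\#\,T_1(w_1\sigma_1)\,\#$, and symmetrically $q_\bullet\,w_2\sigma_2 x_2$ yields a deterministic path to $r_\bullet\,x_2$ with label word $\#\,T_2(w_2\sigma_2)\,\#$. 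As all $\ell$-counters have one and the same length and $T_1,T_2$ are letter-to-letter, the two paths have equal length; their only non-forced configurations are the endpoints ${}_\bullet r\,x_1$ and $r_\bullet\,x_2$, which are bisimilar by hypothesis. The auxiliary fact then yields ${}_\bullet q\,w_1\sigma_1 x_1 \sim q_\bullet\,w_2\sigma_2 x_2$ iff the two label words coincide, i.e. iff $T_1(w_1\sigma_1)=T_2(w_2\sigma_2)$.

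For part~(b), I would apply the rules for ${}_\bullet p$ and $p_\bullet$. Reading the two languages greedily, ${}_\bullet p\,w_3\sigma_3 w_2\sigma_2 w_1\sigma_1 x$ strips off exactly $w_3\sigma_3 w_2\sigma_2$ (the shortest prefix in $\Omega_{\le\ell}^*\Omega_{\ell+1}\Omega_{\le\ell}^*\Omega_{\ell+1}$) and reaches ${}_\bullet q\,w_1\sigma_1 x$, emitting $\#\,a^{2N+2}\,\#$, where $N$ denotes the common length of an $\ell$-counter; whereas $p_\bullet\,w_3\sigma_3 w_2\sigma_2 w_1\sigma_1 x$ strips off only $w_3\sigma_3$ and reaches $q_\bullet\,w_2\sigma_2 w_1\sigma_1 x$, emitting $\#\,a^{2(N+1)}\,\#$. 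The doubling homomorphism $\Gamma\mapsto aa$ in the $p_\bullet$-rule exactly offsets the fact that it pops one $\ell$-counter-and-symbol instead of two, so that $2N+2 = 2(N+1)$ and the two label words agree. Again the paths are forced except at their endpoints, so the auxiliary fact gives ${}_\bullet p\,w_3\sigma_3 w_2\sigma_2 w_1\sigma_1 x \sim p_\bullet\,w_3\sigma_3 w_2\sigma_2 w_1\sigma_1 x$ iff ${}_\bullet q\,w_1\sigma_1 x \sim q_\bullet\,w_2\sigma_2 w_1\sigma_1 x$. I would then invoke part~(a) with $x_1 = x$ and $x_2 = w_1\sigma_1 x$: its hypothesis ${}_\bullet r\,x \sim r_\bullet\,w_1\sigma_1 x$ is exactly the hypothesis of part~(b), so that last bisimilarity holds iff $T_1(w_1\sigma_1)=T_2(w_2\sigma_2)$, which completes the chain.

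The main obstacle is bookkeeping rather than conceptual. One must pin down precisely which prefix each macro-rule language matches, so that exactly the intended counters and $\Omega_{\ell+1}$-symbols---and nothing more---are popped, and one must verify the length accounting $2N+2 = 2(N+1)$ that makes the two forced paths in part~(b) carry identical label words. These are the points where the specific design of the four macro rules (in particular the choice of $\Gamma\mapsto aa$ rather than $\Gamma\mapsto a$ for $p_\bullet$) is essential, and where a careless setup would break the equivalence.
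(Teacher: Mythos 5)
Your proposal is correct and follows essentially the same route as the paper: part~(a) by observing that the macro rules induce deterministic paths whose labels are $\#\,T_i(w_i\sigma_i)\,\#$ (which the paper dismisses as ``immediate from the definitions''), and part~(b) by using the two pop rules---whose label words agree thanks to the $\Gamma\mapsto aa$ doubling---to reduce to part~(a) with $x_1=x$ and $x_2=w_1\sigma_1 x$. You merely make explicit the bookkeeping (the forced-path lemma, the prefix matching, and the length count $2N+2=2(N+1)$) that the paper leaves implicit.
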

\begin{proof} %[of Lemma~\ref{lem-transducers}]
Part~(a) is immediate from the definitions.
For part~(b) we have:
\begin{align*}
                       & {}_\bullet p w_3 \sigma_3 w_2 \sigma_2 w_1 \sigma_1 x \sim p_\bullet w_3 \sigma_3 w_2 \sigma_2 w_1 \sigma_1 x \\
  \quad\Longleftrightarrow\quad & {}_\bullet q w_1 \sigma_1 x \sim q_\bullet w_2 \sigma_2 w_1 \sigma_1 x && \text{by the first two rules} \\
  \quad\Longleftrightarrow\quad & T_1(w_1 \sigma_1) = T_2(w_2 \sigma_2) && \text{by part~(a)}
\end{align*}
\qed
\end{proof}

\subsubsection{Macro rules with a state pair on the left-hand side.}

\renewcommand{\vec}[1]{\overrightarrow{#1}}

In the following we assume that control states, i.e., the elements of~$Q$, are of the form
${}_\bullet q$ and $q_\bullet$.
By~$\vec{q}$ we refer to the \emph{state pair} $({}_\bullet q,q_\bullet) \in Q^2$.
Given $w \in \Gamma^*$, we write $\Bis{\vec{q} w}$ to denote that ${}_\bullet q w \sim q_\bullet w$.

For $\sigma_1 \cdots \sigma_\ell \in \Gamma^\ell$ with $\ell \ge 0$ we write
\[
 \vec{q} \ctran{} \vec{r} \sigma_1 \sigma_2 \cdots \sigma_\ell
\]
to denote that there are state pairs ${\vec{q_0}}, \ldots, {\vec{q_\ell}}$ with
${\vec{q_\ell}} = \vec{q}$ and
\begin{align*}
 & {}_\bullet q_\ell \btran{a} {}_\bullet q_{\ell-1} \sigma_\ell, \ \ldots, \
  {}_\bullet q_1 \btran{a} {}_\bullet q_0 \sigma_1, \
{}_\bullet q_0  \btran{a} {}_\bullet r \quad \text{and} \\
 &  q_{\ell\bullet} \btran{a} q_{\ell-1\bullet} \sigma_\ell, \ \ldots, \
  q_{1\bullet} \btran{a} q_{0\bullet} \sigma_1, \
q_{0\bullet}  \btran{a} r_\bullet,
\end{align*}
and there are no other rules with ${}_\bullet q$ or $q_\bullet$ on the left-hand side
 and no other rules involving ${\vec{q_0}}, \ldots, {\vec{q_{\ell-1}}}$.
With this macro rule we have
\begin{equation}
 \text{for all } x \in \Gamma^* : \quad \Bis{\vec{q}x} \ \Longleftrightarrow \ \Bis{\vec{r} \sigma_1 \sigma_2 \cdots \sigma_\ell x}\,. \label{eq-circ}
\end{equation}
For $\sigma_1, \sigma_2 \in \Gamma \cup \{\varepsilon\}$ we write
\[
  \vec{q} \ctran{\Def} \left\{ \vec{r_1} \sigma_1, \ \vec{r_2} \sigma_2 \right\}
\]
to denote that Defender's forcing is implemented as described in Lemma~\ref{lem-gadgets}~(a);
i.e., in terms of Figure~\ref{fig-gadgets}~(a) we have the state correspondences
 ${}_\bullet s = {}_\bullet q$ and $s_\bullet = q_\bullet$
 and ${}_\bullet t = {}_\bullet r_1\sigma_1$ and $t_\bullet = r_{1\bullet}\sigma_1$
 and ${}_\bullet t' = {}_\bullet r_2\sigma_2$ and $t_\bullet' = r_{2\bullet}\sigma_2$,
the internal rules ${}_\bullet q \btran{a} u_1, \ \ldots, \ q_\bullet \btran{a} u_3$
and finally the push rules $u_1 \btran{a} {}_\bullet r_1\sigma_1, \ \ldots, \ u_3 \btran{b}
{}_\bullet r_2\sigma_2$,
as prescribed by Figure~\ref{fig-gadgets}~(a).
Intuitively, in a Defender-Attacker game, when the play is in a configuration $({}_\bullet qx,
q_\bullet x)$ for $x \in \Gamma^*$,
 then Defender chooses whether the game will be in
 $({}_\bullet r_1\sigma_1 x, r_{1\bullet}\sigma_1 x)$ or in
 $({}_\bullet r_2\sigma_2 x, r_{2,\bullet}\sigma_2 x)$.
In other words, we have $\Bis{\vec{q}}x$ iff $\Bis{\vec{r_1}\sigma_1 x}$ or $\Bis{\vec{r_2}\sigma_2 x}$.
We generalise this notation to sets:
for $\{w_1, \ldots, w_\ell\} \subseteq \Gamma^*$ we also write
\[
 \vec{q} \ctran{\Def} \left\{ \vec{r_1} w_1, \ \ldots, \ \vec{r_\ell} w_\ell \right\}
\]
to denote that a sequence of Or-gadgets (Figure~\ref{fig-gadgets}~(a)) is used to achieve
\begin{equation}
 \text{for all } x \in \Gamma^* : \quad \Bis{\vec{q}x}\ \quad\Longleftrightarrow\quad \bigvee_{i=1}^\ell
\Bis{\vec{r_i}w_i} x\,. \label{eq-def}
\end{equation}

\noindent
Similarly we write
\[
 \vec{q} \ctran{\Att} \left\{ \vec{r_1} w_1, \ \ldots, \ \vec{r_\ell} w_\ell \right\}
\]
to denote that And-gadgets (Attacker's forcing, Figure~\ref{fig-gadgets}~(b)) are used to achieve
\begin{equation}
 \text{for all } x \in \Gamma^* : \quad \Bis{\vec{q}x} \ \Longleftrightarrow \ \bigwedge_{i=1}^\ell
\Bis{\vec{r_i}w_i x}\,. \label{eq-att}
\end{equation}

% !TEX root =  main-lncs.tex

\section{The Construction} \label{sec-construction}
\newcommand{\Last}{\mathsf{last}}
We prove Theorem~\ref{thm-main} by showing that PDA bisimilarity is $k$-EXPSPACE-hard for all $k \ge 1$.

As the first step of our reduction we consider a problem on letter-to-letter transducers.
A \emph{transducer machine} is a triple $\mathcal{T} = (\ell,T_1,T_2)$,
 where $\ell \ge 1$, and $T_1,T_2:\{0,1\}^*\to\Upsilon^*$ are letter-to-letter transducers.
Given a transducer machine~$\T$ we call $z\in\{0,1\}^\ell$ a {\em dead end} if
there is no $z'\in\{0,1\}^\ell$ with $T_1(z)=T_2(z')$.
We say that $\T$ is \emph{deterministically terminating}
 if there are $t \in \N$ and words $z_0, \ldots, z_t \in \{0,1\}^\ell$ such that
\begin{itemize}
\item $z_0 = 1^\ell$,
\item for each $z_i$ there is at most one $z' \in \{0,1\}^\ell$ with $T_1(z_i) = T_2(z')$,
\item $T_1(z_i) = T_2(z_{i+1})$ holds for all $i \in [0,t-1]$, and
\item $z_t$ is a dead end.
\end{itemize}
If $\T$ is deterministically terminating we define $\Last(\T)\defeq z_t$.
%If $\mathcal{T}$ is deterministically terminating, we write $L(\mathcal{T}) \defeq \{T(z_0), \ldots, T(z_t)\}$.
The first step of our reduction is applying the algorithm of the following proposition.
\begin{proposition} \label{prop-trans-problem}
For each $k\geq 1$ there exists a $k$-EXPSPACE-complete language $L\subseteq\Sigma^*$ such that
the following is computable in polynomial time:

INPUT: $w\in\Sigma^n$.

OUTPUT:
Transducers $T_1, T_2 : \{0,1\}^*\to\Upsilon^*$, where
 $\mathcal{T} = (\Tow(k,n),T_1,T_2)$ is a
deterministically terminating transducer machine and
$0^{\Tow(k,n)}$ is a dead end
such that moreover
$$
w\in L \qquad\text{if and only if}\qquad \Last(\T)=0^{\Tow(k,n)}\,.
$$
\end{proposition}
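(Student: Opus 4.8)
The plan is to reduce from an arbitrary $k$-EXPSPACE-complete language $L$ via a standard space-bounded Turing machine simulation, where the transducers $T_1,T_2$ encode the transition relation of a Turing machine $M$ deciding $L$ in space $\Tow(k,n)$. The central idea is to let each word $z\in\{0,1\}^{\Tow(k,n)}$ represent an index into the configuration of $M$, so that the pair $(T_1,T_2)$ implements ``one step of $M$'' as the deterministic successor relation on configurations, with the sequence $z_0,z_1,\ldots,z_t$ tracing the run of $M$ on input $w$. First I would recall that a $k$-EXPSPACE-complete language is decided by a deterministic Turing machine $M$ using space $\Tow(k,n)$ on inputs of length~$n$; a configuration of $M$ is then a word of length $\Tow(k,n)$ over a tape alphabet, and the run of $M$ on $w$ is a sequence of such configurations. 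The technical heart is to encode these configurations and the one-step relation of $M$ through the letter-to-letter transducers $T_1,T_2$ acting on arguments $z\in\{0,1\}^{\Tow(k,n)}$, exploiting the fact that the condition $T_1(z_i)=T_2(z_{i+1})$ enforces exactly the successor constraint between consecutive configurations.

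Concretely, I would arrange the encoding so that $z_i$ names the $i$-th configuration in the run (rather than a tape cell), with $z_0=1^{\Tow(k,n)}$ a fixed name for the initial configuration on input~$w$, and so that $T_1(z_i)$ and $T_2(z_{i+1})$ produce a common output precisely when the configuration named $z_{i+1}$ is the $M$-successor of the one named $z_i$. Because $M$ is deterministic, the successor is unique, which gives the ``at most one $z'$'' clause; the dead-end condition then corresponds to a halting configuration (one with no successor). The whole point is to set the convention that the accepting halting configuration is named $0^{\Tow(k,n)}$ and is the unique dead end the machine can reach, while a rejecting halting configuration is arranged to be a \emph{different} dead end. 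Under this convention, $M$ accepts $w$ exactly when the run terminates at the configuration named $0^{\Tow(k,n)}$, i.e.\ $\Last(\T)=0^{\Tow(k,n)}$, which is the required equivalence $w\in L \iff \Last(\T)=0^{\Tow(k,n)}$. I would also have to verify that $T_1,T_2$ can be built in time polynomial in $n$: since $M$'s transition table has fixed size, the transducers only need a constant number of states to read through an argument of length $\Tow(k,n)$ letter-by-letter and emit local consistency information, so their \emph{descriptions} are polynomial even though their arguments are astronomically long.

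The main obstacle I anticipate is designing the letter-to-letter output encoding so that the single constraint $T_1(z_i)=T_2(z_{i+1})$ faithfully captures the \emph{entire} step relation of $M$, which is a nonlocal condition over a tape of length $\Tow(k,n)$. The standard trick here is the Hennie--Stockmeyer style ``windowing'': the correctness of one computation step can be checked by verifying local $2\times 3$ windows between consecutive configuration rows, and a letter-to-letter transducer scanning the tape can emit, at each position, precisely the local window information needed for the comparison. Making this windowing compatible with the letter-to-letter restriction (equal input and output lengths, no erasing) and with the requirement that $z_0=1^{\Tow(k,n)}$ and the accepting dead end be $0^{\Tow(k,n)}$ will require a careful, but entirely mechanical, choice of the intermediate output alphabet $\Upsilon$ and of how head position, state, and the local tape contents are threaded through the two transducers. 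I expect no conceptual difficulty beyond this bookkeeping, so the proof reduces to exhibiting the encoding and checking the four bullet points of deterministic termination together with the polynomial-time bound.
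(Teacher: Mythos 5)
Your overall strategy coincides with the paper's: encode each configuration of a $\Tow(k,n)$-space-bounded deterministic machine as a word $z_i\in\{0,1\}^{\Tow(k,n)}$, let $T_1$ output the (locally computable) successor configuration and $T_2$ a re-encoding of its own input, so that $T_1(z_i)=T_2(z_{i+1})$ holds exactly for consecutive configurations, with $1^{\Tow(k,n)}$ the initial configuration, $0^{\Tow(k,n)}$ the accepting configuration, and the rejecting configuration a distinct dead end; your ``windowing'' is the same device as the paper's delayed-output transducers.

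There is, however, a genuine gap: you never say where the input $w$ enters the construction. You build $T_1,T_2$ from the transition table of a \emph{fixed} machine $M$ deciding $L$ and assert they ``only need a constant number of states''; but then the output of the reduction is independent of $w$ --- the run traced by $\T$ starts from the fixed string $1^{\Tow(k,n)}$ and is driven entirely by $w$-independent transducers --- so the equivalence $w\in L\iff\Last(\T)=0^{\Tow(k,n)}$ cannot hold for any nontrivial $L$. Relatedly, the requirement $z_0=1^{\Tow(k,n)}$ is incompatible with ``the initial configuration of $M$ on input $w$'' under any $w$-independent symbol-wise encoding, since that configuration carries $w$ on its tape. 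The paper's fix is to replace $M$ by a machine $\mathcal{M}_w$ that starts on the \emph{empty} tape, first writes $w$ (using about $n$ extra states) and then simulates $M$ on $w$; the initial configuration is then the fixed empty-tape configuration, which can be encoded as $1^{\Tow(k,n)}$, and $w$ is hard-coded into the transducers' finite control, which therefore has $O(n+m)$ rather than $O(1)$ states --- still polynomial. A second, smaller omission is length normalization: a configuration of a machine using $\Tow(k,n)$ cells over a non-binary alphabet (plus head and state information) has a binary encoding strictly longer than $\Tow(k,n)$, so one needs a linear-speedup-style adjustment (the paper takes $\mathcal{M}$ to be $\bigl\lfloor\Tow(k,n)/(n+m)\bigr\rfloor-n$ space bounded) to make each configuration encoding have length exactly $\Tow(k,n)$. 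Both points are repairable, but as written your sketch does not yield a $w$-dependent, correctly sized output.
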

\begin{proof}[sketch]
\newcommand{\enc}{\mathit{enc}}%
Let us fix some $k\geq 1$. Let us first mention how the language $L$ can be chosen.
The following claim is a simple adaption of the linear speedup theorem, we refer the reader to \cite{Pap94} for details.
\medskip

\noindent
{\bf Claim: }
For each $k\geq 1$ there exists a deterministic Turing machine (DTM) $\mathcal{M}$
and some $m\in\N$ such that the following holds:
\begin{itemize}
\item[(1)] $m$ is the sum of the number of states of $\mathcal{M}$ plus the number of tape symbols of $\mathcal{M}$.
\item[(2)] the DTM $\mathcal{M}$ is $\left\lfloor\frac{\Tow(k,n)}{n+m}\right\rfloor-n$ space bounded
for all but finitely many $n\in\N$.
\item[(3)] For each $n\geq 0$ we have
\begin{itemize}
\item[(i)] the DTM $\mathcal{M}$ has a unique accepting configuration and a unique rejecting configuration that both do not have any successor configuration
on each input of length $n$ and
\item[(ii)]
  the unique computation of~$\mathcal{M}$ on $w$ either reaches the accepting or the
rejecting configuration for each word $w$ of length $n$.
\end{itemize}
\item[(4)] The acceptance problem of $\mathcal{M}$ is complete
for $k$-EXPSPACE under polynomial time many-one reductions.
\end{itemize}

Let us fix some $k\ge 1$ and some Turing machine $\mathcal{M}$ that satisfies points
(1) to (4) of the above Claim for the rest of this proof.
We define $L\defeq L(\mathcal{M})$.
Let us assume that the input alphabet of $\mathcal{M}$ is $\Sigma$.
Let us fix an input word $w \in \Sigma^n$.
We can assume without loss of generality that
$\mathcal{M}$ is $\left\lfloor\frac{\Tow(k,n)}{n+m}\right\rfloor-n$ space bounded
on input $w$ (the other finitely many cases can be dealt with explicitly in our reduction).
Recall that by Point (3) of the above Claim we have that
$\mathcal{M}$ has a unique accepting configuration and a unique rejecting configuration,
 and that all computations of~$\mathcal{M}$ reach either the accepting or the rejecting configuration,
 and that the accepting and rejecting configurations have no successor configurations.
In a first step, we modify $\mathcal{M}$ to a DTM $\mathcal{M}_w$ so that $\mathcal{M}_w$ started
on the empty tape first writes~$w$ on the tape
 and then simulates~$\mathcal{M}$ on~$w$.
We note that we can construct $\mathcal{M}_w$ in such a way that
\begin{itemize}
\item $\mathcal{M}_w$ can be obtained
from $\mathcal{M}$ by adding at most $n$ additional states and corresponding transitions
that allow us to initially copy $w$ onto the working tape,
\item the sum of the number of states of $\mathcal{M}_w$ plus the number of tape symbols
of $\mathcal{M}_w$ is $n+m$,
\item $\mathcal{M}_w$ is $\left\lfloor\frac{\Tow(k,n)}{n+m}\right\rfloor-n+n=
\left\lfloor\frac{\Tow(k,n)}{n+m}\right\rfloor$ space bounded and
\item $w \in L(\mathcal{M})$ if and only if $\varepsilon \in L(\mathcal{M}_w)$.
\end{itemize}

Fix a binary encoding~$\enc$ of configurations so that each
tape symbol and each pair consisting of a state and a tape symbol of $\mathcal{M}_w$ can
 be (injectively) encoded by
a binary string of length $n+m$.
We extend this encoding to configurations of $\mathcal{M}_w$ by mapping each configuration $c$
(injectively) to a string $\enc(c) \in \{0,1\}^\ell$
 where $\ell = \Tow(k,n)$.
Moreover we assume that the initial configuration of~$\mathcal{M}_w$ (with empty tape)
is encoded by~$1^\ell$
 and that the (unique) accepting configuration of~$\mathcal{M}_w$ is encoded by~$0^\ell$.
It remains to argue that one can construct transducers $T_1, T_2 : \{0,1\}^*\to\Upsilon^*$ so that
\begin{itemize}
%\item[(a)]
 %for each $z \in \{0,1\}^\ell$ that does not encode a configuration of~$\mathcal{M}_w$ there is no $z' \in \{0,1\}^\ell$ with $T_1(z) = T_2(z')$;
\item[(*)]
 for all configurations $c, c'$ of~$\mathcal{M}_w$ we have $T_1(\enc(c)) = T_2(\enc(c'))$ if and only if
  $c'$ is a successor configuration (i.e., the unique one) of~$c$.
\end{itemize}
%It is easy to establish property~(a).
For establishing (*), the idea is to construct~$T_1, T_2$ so that
 if $c$ is a configuration of~$\mathcal{M}_w$
 then $T_1(\enc(c))$ is an encoding of~$c'$, where $c'$ is the successor configuration of~$c$,
 and $T_2(\enc(c))$ is an encoding of~$c$.
The most straightforward implementation of this idea would be to let $T_1(\enc(c)) = \enc(c')$ and $T_2(\enc(c)) = \enc(c)$.
However, this cannot be easily done, if at all, loosely speaking because
 the read-write head of~$\mathcal{M}_w$ may move in the direction ``opposite'' to the transducers so that the transducer~$T_1$ would have to ``guess''
 where the read-write head is before it actually sees it.
Therefore, we construct $T_1, T_2$ so that their output is ``delayed'' by a few steps:
Transducer $T_1$ remembers in its finite control the last few bits of the encoded tape and outputs the bits of the encoding of the successor configuration only
 after $T_1$ can be sure about them.
Transducer $T_2$ does not compute the successor configuration, but only re-encodes the encoded configuration,
 and outputs the bits of the new encoding in a similarly delayed way as~$T_1$.
Since transducers need to output a single symbol per step, the transducers $T_1, T_2$ output a dummy symbol in the first few steps.
At the end they need to output a single symbol containing the last few bits of the new encoding.
As a consequence the alphabet $\Upsilon$ cannot be (easily) taken to be binary;
thus we simply choose $\Upsilon$ sufficiently large for this to work.
\qed
\end{proof}

%So let us directly apply Proposition \ref{prop-trans-problem} and let us fix the deterministically
%terminating transducer machine $\T=(\Tow(k,n),T_1,T_2)$,
%where $0^{\Tow(k,n)}$ is a dead end with respect to $\T$ for the rest of this section.
Let us fix some $k\geq 1 $ for the rest of this section and let us fix the
$k$-EXPSPACE complete language $L\subseteq\Sigma^*$ that satisfies Proposition \ref{prop-trans-problem}.
Moreover let $w\in\Sigma^n$ be a word.
Our overall goal is to compute from $w$ in polynomial time a PDA and two of its configurations that are
bisimilar if and only if $w\in L$.
As an intermediate step, let us fix for the rest of this section the output $(T_1,T_2)$ of the algorithm of Proposition~\ref{prop-trans-problem} on input $w$,
 and let $\T = (\Tow(k,n),T_1,T_2)$.
In the rest of the section we will show how to compute from~$T_1, T_2$ and~$n$ in time
 polynomial in $|T_1|+|T_2|+n$
a PDA $\P=(Q,\Omega,\Act,\mathord{\btran{}})$ so that we have
 $\Bis{\q{start}}$ in~$\S(\P)$ if and only if $\Last(\T)=0^{\Tow(k,n)}$ holds in~$\T$,
where ${}_\bullet\langle\mathsf{start}\rangle$ and
$\langle\mathsf{start}\rangle_\bullet$ will be
control states of $\P$.
We recall that $k$ is a fixed constant.

\noindent
Let
\begin{align*}
B \defeq
& \{\textsf{start}, \textsf{stop$_\ell$},
\textsf{testDec$_\ell$}, \textsf{testDec$^1_\ell$},
\textsf{ones$_\ell$}, \textsf{ones$^1_\ell$},
\textsf{decOk$_\ell$},
\textsf{zero$_\ell$}, \textsf{zero$^1_\ell$},
\textsf{dec$_\ell$}, \textsf{dec$^1_\ell$},
\textsf{dec$^{(i)}_0$}, \\
& \ \textsf{fin},
\textsf{testFin},
\textsf{popAll},
\textsf{next},
\textsf{next$^1$},
\textsf{tran},
\textsf{testTran},
\textsf{testTran$^1$}
\mid 0 \le \ell \le k+2, \ 1 \le i \le n\} \cup B_{\mathit{impl}}
\end{align*}
 be a set of ``basic symbols'' that we use to construct the control states~$Q$.
The set $B_{\mathit{impl}}$ contains further (implicit) symbols that are needed to implement macro rules.
In the following we regard each element of~$B$ as a single symbol (of length~$1$).
Define
$\Omega\defeq\Omega_{\le k+1}$,
$\Act\defeq\{0,1,\#,a,b\}\uplus\Upsilon$,
 $Q = {}_\bullet B \cup B_\bullet$, where
 ${}_\bullet B \defeq \{ \ql{$\alpha$} \mid \alpha \in B^*, \ 1 \le |\alpha| \le k+2\}$ and similarly
 $B_\bullet \defeq \{ \qr{$\alpha$} \mid \alpha \in B^*, \ 1 \le |\alpha| \le k+2\}$.
For instance, we have $\ql{dec$_{k-1}$ ones$_k$} \in Q$.
We will use $\textsf{\sta}$ to indicate an arbitrary word $\alpha \in B^*$ with $1 \le |\alpha| \le k+1$.

This section is organised as follows.
In Section \ref{sec-consecutive} we show how we can implement a bisimulation (sub-)game in $\S(\P)$
that allows us to test whether two $\ell$-counters %(that are situated on top of the stack)
have consecutive values for each $\ell\geq 0$.
In Section \ref{sec-building} we show how we can implement a bisimulation (sub-)game in $\S(\P)$
that allows Defender to push an $\ell$-counter onto the stack for each $\ell\geq 0$.
We conclude our reduction in Section \ref{sec-simulating}.

\subsection{Checking Counters for Consecutive Values}{\label{sec-consecutive}}

For each $\ell \in [0, k]$
 we include control states $\ql{stop$_\ell$}, \qr{stop$_\ell$}$
 such that for all $x \in \Omega^*$ and all $\sigma \in \Omega_{\ell+1}$ and all $\ell$-counters~$w$ we have
\begin{equation}
 \ql{stop$_\ell$} x \sim \qr{stop$_\ell$} w \sigma x \label{eq-stop}
\end{equation}
This is easily achieved, for instance by including no rules with $\ql{stop$_\ell$}$ or $\qr{stop$_\ell$}$ on the left-hand side.
%See Section~\ref{sub-normedness} for an alternative.

We need to be able to verify whether two counters (at convenient positions) on the stack
have consecutive values.
To this end we include rules such that the following statement holds:

\begin{lemma} \label{lem-check-counters}
Let $x \in \Omega^*$, and $\sigma_1, \sigma_2, \sigma_3 \in \Omega_{\ell+1}$,
and let $w_1, w_2, w_3$ be $\ell$-counters.
%Let $x \in \Omega^*$ such that $\ql{stop$_\ell$} x \sim \qr{stop$_\ell$} w_2 \sigma_2 x$.
Then $\Bis{\q{testDec$_\ell$} w_3 \sigma_3 w_2 \sigma_2 w_1 \sigma_1 x}$
 iff $\val(w_1) = \val(w_2) + 1$.
\end{lemma}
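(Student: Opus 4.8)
The plan is to let the state pair $\q{testDec$_\ell$}$ play the role of ${}_\bullet p, p_\bullet$ in Lemma~\ref{lem-transducers}~(b) and to reduce the consecutiveness test to an equality of transducer images. First I would include the macro rules of Lemma~\ref{lem-transducers} instantiated with ${}_\bullet p = \ql{testDec$_\ell$}$, $p_\bullet = \qr{testDec$_\ell$}$, fresh intermediate states ${}_\bullet q, q_\bullet$ (realised by the $\q{dec$_\ell$}$-family states), and the continuation states ${}_\bullet r = \ql{stop$_\ell$}$, $r_\bullet = \qr{stop$_\ell$}$. Since \eqref{eq-stop} gives $\ql{stop$_\ell$} x \sim \qr{stop$_\ell$} w_1 \sigma_1 x$, the hypothesis ${}_\bullet r x \sim r_\bullet w_1 \sigma_1 x$ of part~(b) is discharged for free, and the lemma reduces the claim to exhibiting letter-to-letter transducers $T_1, T_2 : \Omega_{\le \ell+1}^* \to \Act^*$ with
\[
 T_1(w_1 \sigma_1) = T_2(w_2 \sigma_2) \quad\Longleftrightarrow\quad \val(w_1) = \val(w_2) + 1
\]
for all $\ell$-counters $w_1, w_2$ and all $\sigma_1, \sigma_2 \in \Omega_{\ell+1}$.

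These transducers implement binary decrement, one digit at a time. Recall that an $\ell$-counter has the form $c_0 \tau_0 \cdots c_m \tau_m$, where $c_i$ is the \emph{unique} $(\ell-1)$-counter of value~$i$ and $\tau_i \in \Omega_\ell$ is its $i$-th digit; consequently two valid $\ell$-counters differ only in their digits and present symbol-for-symbol identical index sub-counters. Since the stack keeps the value-$0$ index block on top, the macro rule feeds the counters to the transducers starting from the least significant digit. I let $T_1$ compute the decrement of $w_1$ — carrying a borrow that flips $0_\ell$ to $1_\ell$ until the first $1_\ell$ (which it flips to $0_\ell$), after which digits are copied — while echoing every symbol of $\Omega_{\le \ell-1}$ verbatim; and I let $T_2$ merely re-encode $w_2$ into $\Act$, echoing both its sub-counter symbols and its digits. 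A single borrow bit suffices, and the input symbol itself reveals whether the current position is a digit (in $\Omega_\ell$) or part of a sub-counter (in $\Omega_{\le \ell-1}$), so both transducers have constant size and are letter-to-letter. Their outputs therefore have equal length; the echoed sub-counter symbols always agree because $w_1$ and $w_2$ share the same $c_i$, and the digit outputs agree exactly where $(w_1 - 1)_i = (w_2)_i$. Hence the two images coincide iff $\val(w_2) = \val(w_1) - 1$.

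The delicate point is the boundary $\val(w_1) = 0$: there the borrow survives past the most significant digit and the naive decrement produces the all-ones word, which would match the ones-counter $w_2$ and spuriously report $\val(\text{zero}) = \val(\text{ones}) + 1$. I would block this by letting $T_1$ and $T_2$ emit a fixed dummy symbol on the trailing $\Omega_{\ell+1}$-symbol (so the test is independent of $\sigma_1, \sigma_2$), except that $T_1$ emits a distinguished action there if the borrow is still set. As the borrow is cleared before the end precisely when $\val(w_1) \ge 1$, this rejects exactly the wrap-around case while leaving every genuine consecutive pair accepted. The auxiliary states $\q{dec$_\ell$}$, $\q{decOk$_\ell$}$, $\q{zero$_\ell$}$, $\q{ones$_\ell$}$ and their primed copies can be used to assemble the same effect through an And-gadget (Attacker's forcing, cf.~\eqref{eq-att}), conjoining the raw decrement check with a test that $w_1$ is nonzero. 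For the base case $\ell = 0$, a $0$-counter is a plain word of $\Omega_0^n$ with no sub-counters; the same borrow propagation applies, and the states $\q{dec$^{(i)}_0$}$ for $1 \le i \le n$ walk through its $n$ bit positions directly.

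The routine parts are checking that the decrement transducer is constant-size and correct and that the macro-rule interface meets the hypotheses of Lemma~\ref{lem-transducers}. I expect the main obstacle to be the boundary bookkeeping — pinning down the borrow/overflow convention so that it rejects precisely $\val(w_1) = 0$ — together with the (inductively justified) claim that a valid $(\ell-1)$-counter of a given value is unique, which is what makes the echoed index sub-counters of $w_1$ and $w_2$ cancel and reduces image equality to plain digit-wise comparison.
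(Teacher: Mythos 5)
Your proposal is correct and follows essentially the same route as the paper: reduce via Lemma~\ref{lem-transducers}~(b) using $\ql{stop$_\ell$}/\qr{stop$_\ell$}$ and \eqref{eq-stop} as the continuation, then compare the two counters by constant-size letter-to-letter transducers that disagree on the trailing $\Omega_{\ell+1}$-symbol exactly in the wrap-around case. The only (immaterial) differences are mirror-image ones: the paper increments $w_2$ with $T^{+1}_\ell$ and copies $w_1$ rather than decrementing $w_1$, and it collapses all $\Omega_{\le\ell-1}$ sub-counter symbols to a single action via a shuffle instead of echoing them verbatim.
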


Let $T^{+0}_\ell, T^{+1}_\ell : (\Omega_\ell \cup \Omega_{\ell+1})^* \to \{0,1,a,b\}^*$ be the transducers depicted in Figure~\ref{fig-transducers}.
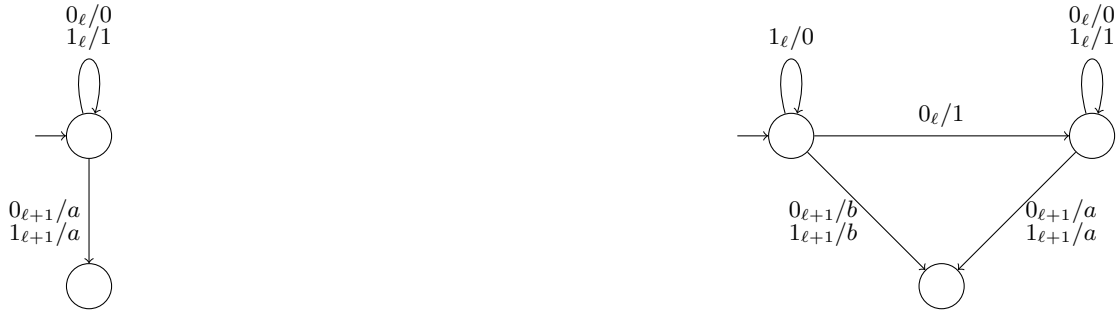
\begin{figure}
\begin{tikzpicture}[every state/.style={minimum size=6mm}]
 \node[initial, initial text=,state] (0) at (0,0) {$$};
 \node[state] (2) at (0,-2) {$$};
 \path[->] (0) edge [loop above,min distance=10mm] node {$1_\ell/1$} node[yshift=3mm] {$0_\ell/0$} (0)
               edge [left] node {$0_{\ell+1}/a$} node[yshift=-3mm] {$1_{\ell+1}/a$} (2);
\end{tikzpicture}
\hfill
\begin{tikzpicture}[every state/.style={minimum size=6mm}]
 \node[initial, initial text=,state] (0) at (0,0) {$$};
 \node[state] (1) at (4,0) {$$};
 \node[state] (2) at (2,-2) {$$};
 \path[->] (0) edge [loop above,min distance=10mm] node {$1_\ell/0$} (0)
               edge [above] node {$0_\ell/1$} (1)
               edge [left] node {$0_{\ell+1}/b$} node[yshift=-3mm] {$1_{\ell+1}/b$} (2)
           (1) edge [loop above,min distance=10mm] node {$1_\ell/1$} node[yshift=3mm] {$0_\ell/0$} (1)
               edge [right] node {$0_{\ell+1}/a$} node[yshift=-3mm] {$1_{\ell+1}/a$} (2);
\end{tikzpicture}
\caption{Transducers $T^{+0}_\ell$ and $T^{+1}_\ell$}
\label{fig-transducers}
\end{figure}
Transducers~$T^{+0}_\ell, T^{+1}_\ell$ interpret the input word over $\Omega_{\ell}$ as a number in binary,
 with the least significant bit read first.
Transducer~$T^{+0}_\ell$ copies the number and outputs $a$ upon reading an $\Omega_{\ell+1}$-symbol.
Transducer~$T^{+1}_\ell$ attempts to increase the number by~$1$ and output~$a$ upon reading an $\Omega_{\ell+1}$-symbol,
 but it outputs~$b$ if the input number consisted only of~$1$s.
%Let in the following $h_\ell : \Omega_{\le \ell-1}^* \to \{a\}^*$ be the homomorphism with $h_\ell(\tau)=a$.
If $w_1, w_2$ are $\ell$-counters and $\sigma_1, \sigma_2 \in \Omega_{\ell+1}$, then we have:
\begin{equation}
\begin{aligned}
                        & (T^{+0}_\ell \shuf \Omega_{\le \ell-1} \mapsto a)(w_1 \sigma_1) = (T^{+1}_\ell \shuf \Omega_{\le \ell-1} \mapsto a)(w_2 \sigma_2) \\
 \quad \text{iff} \quad & \val(w_1) = \val(w_2)+1  \label{eq:plus-one-trans}
\end{aligned}
\end{equation}
%$T^{+0}_\ell$ and~$T^{+1}_\ell$ will output the same string iff
% the number read by~$T^{+0}_\ell$ is one larger than the number read by $T^{+1}_\ell$.
Transducers $T^{+0}_\ell, T^{+1}_\ell$ are used in the following rules.

\begin{align*}
  \ql{testDec$^1_\ell$} \left( \Omega_{\le \ell-1}^*\cdot \Omega_\ell \right)^*\cdot \Omega_{\ell+1}
         & \btran{%\stackrel{\text{\scriptsize \mbox{$\Omega_{\le \ell-1} \mapsto a$}}}{\Omega_\ell \cup \Omega_{\ell+1} \mapsto T^{+0}_\ell} }
            T^{+0}_\ell \shuf \Omega_{\le \ell-1} \mapsto a} \ql{stop$_\ell$} \\
  \qr{testDec$^1_\ell$} \left( \Omega_{\le \ell-1}^*\cdot \Omega_\ell \right)^*\cdot \Omega_{\ell+1}
         & \btran{%\stackrel{\text{\scriptsize \mbox{$\Omega_{\le \ell-1} \mapsto a$}}}{\Omega_\ell \cup \Omega_{\ell+1} \mapsto T^{+1}_\ell} }
            T^{+1}_\ell \shuf \Omega_{\le \ell-1} \mapsto a} \qr{stop$_\ell$}\\
  \ql{testDec$_\ell$} \Omega_{\le \ell}^*\cdot \Omega_{\ell+1}\cdot
                        \Omega_{\le \ell}^*\cdot \Omega_{\ell+1}
                      & \btran{\Omega \mapsto a} \ql{testDec$^1_\ell$} \\
  \qr{testDec$_\ell$} \Omega_{\le \ell}^*\cdot \Omega_{\ell+1}
                      & \btran{\Omega \mapsto aa} \qr{testDec$^1_\ell$}
\end{align*}

\begin{proof}[of Lemma~\ref{lem-check-counters}]
% Immediate from~\eqref{eq:plus-one-trans} and Lemma~\ref{lem-transducers}~(a).
By~\eqref{eq-stop} we can apply Lemma~\ref{lem-transducers}~(b).
Hence we have:
\begin{align*}
                       & \Bis{\q{testDec$_\ell$} w_3 \sigma_3 w_2 \sigma_2 w_1 \sigma_1 x} \\
  \Longleftrightarrow\ & (T^{+0}_\ell \shuf \Omega_{\le \ell-1} \mapsto a)(w_1 \sigma_1) = (T^{+1}_\ell \shuf \Omega_{\le \ell-1} \mapsto a)(w_2 \sigma_2)
                           && \text{by Lemma~\ref{lem-transducers}~(b)} \\
  \Longleftrightarrow\ & \val(w_1) = \val(w_2)+1 && \text{by~\eqref{eq:plus-one-trans}}
\end{align*}
\qed
\end{proof}

\subsection{Building Counters}{\label{sec-building}}

In the lemmas below we will make statements about properties of $\S(\P)$
if we include certain rules to~$\P$.
For better readability, we will state the properties before we list the rules.

Lemma~\ref{lem-build-counters} below demonstrates how to construct large counters.
Recall that we use $\textsf{\sta}$ to indicate an arbitrary word
from $B^*$ with length between $1$ and $k+1$.
%The input to the construction given in the lemma below is a pushdown automaton
%containing a distinguished control state denoted by
%$\textsf{\sta}$.
%The construction will modify the automaton
%by adding control states (corresponding
%to prefixing $\textsf{\sta}$ with certain keywords)
%along with rules, so as to
% ensure that the following statements hold:
We include rules such that the following holds:

\begin{lemma} \label{lem-build-counters}
Let $\ell \in [0,k]$, and $x \in \Omega^*$, and $\sigma, \tau \in \Omega_{\ell+1}$, and $v, w$ be $\ell$-counters.
\begin{itemize}
\item[(a)]
 Then $\Bis{\q{ones$_\ell$ \sta} x}$ iff $\Bis{\q{\sta} w x}$, where $w$ is the ones $\ell$-counter.
\item[(b)]
 Then $\Bis{\q{decOk$_\ell$ \sta} v \sigma w \tau x}$ iff
      $\val(v) + 1 = \val(w)$ and $\Bis{\q{\sta} v \sigma w \tau x}$.
\item[(c)]
 Then $\Bis{\q{zero$_\ell$ \sta} \sigma w \tau x}$ iff
      $\val(w) = 1$ and $\Bis{\q{\sta} v \sigma w \tau x}$, where $v$ is the $\ell$-counter with $\val(v) = 0$.
\item[(d)]
 Then $\Bis{\q{dec$_\ell$ \sta} \sigma w \tau x}$ iff
      $\val(w) \ne 0$ and $\Bis{\q{\sta} v \sigma w \tau x}$, where $v$ is the $\ell$-counter with $\val(v) + 1 = \val(w)$.
\end{itemize}
\end{lemma}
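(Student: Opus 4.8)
The plan is to construct the required rules and prove (a)--(d) simultaneously by induction on~$\ell$, since the four operations are mutually recursive. Within one level the verifier decOk$_\ell$ of part~(b) is the building block used to realise the value-changing operations dec$_\ell$ and zero$_\ell$ of parts~(d) and~(c); and all three building operations of parts~(a), (c), (d) call the level-$(\ell-1)$ operations ones$_{\ell-1}$, dec$_{\ell-1}$, zero$_{\ell-1}$ granted by the induction hypothesis. I would therefore establish, at each level, part~(b) first, then~(c) and~(d), and finally~(a). In the base case $\ell=0$ a counter is a word of $\Omega_0^n$ of length~$n$: ones$_0$ and zero$_0$ push the fixed words $1_0^n$ and $0_0^n$ via the push macro~\eqref{eq-circ}; dec$_0$ uses the auxiliary states dec$^{(i)}_0$ ($1\le i\le n$) to walk through the $n$ bit positions, with an Or-gadget~\eqref{eq-def} at each step letting Defender push one bit; and decOk$_0$ is treated uniformly with the inductive step below.

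For decOk$_\ell$ (part~(b)) I would use Attacker's forcing~\eqref{eq-att} to split $\q{decOk$_\ell$ \sta}$ into two And-conjuncts on the unchanged stack: a check branch and the continuation $\q{\sta}$. By~\eqref{eq-att}, $\Bis{\q{decOk$_\ell$ \sta} v \sigma w \tau x}$ then holds iff the check branch is bisimilar and $\Bis{\q{\sta} v \sigma w \tau x}$ holds. The check branch is free to pop the stack, since it lives in its own conjunct; it is a transducer comparison of the two topmost counters, set up exactly as in Lemma~\ref{lem-check-counters} via Lemma~\ref{lem-transducers}, feeding the copying transducer $T^{+0}_\ell$ to the lower counter $w$ and the incrementing transducer $T^{+1}_\ell$ to the upper counter $v$ (with the $\Omega_{\le\ell-1}\mapsto a$ shuffle as in~\eqref{eq:plus-one-trans}), so that this conjunct is bisimilar iff $\val(w)=\val(v)+1$. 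This yields part~(b).

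Building a counter on top of the stack is the common engine behind parts~(a), (c), (d). To lay down an $\ell$-counter $v=v_0\rho_0\cdots v_m\rho_m$ with $m=\Tow(\ell,n)-1$, I would work bottom-up: push the lowest separator, build the maximal sub-counter $v_m$ with ones$_{\ell-1}$ (whose value is exactly $m$), and then loop, in each round pushing a separator and calling dec$_{\ell-1}$ to obtain the next lower sub-counter, until a final zero$_{\ell-1}$ produces $v_0$. Defender chooses, via an Or-gadget~\eqref{eq-def} in each round, whether to continue decrementing or to finish with zero$_{\ell-1}$; the value-tests built into dec$_{\ell-1}$ (nonzero) and zero$_{\ell-1}$ ($=1$) force the sub-counters to realise exactly the values $m,m-1,\dots,1,0$, pinning down the skeleton. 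For ones$_\ell$ I fix every $\Omega_\ell$-separator to $1_\ell$, giving part~(a). For zero$_\ell$ and dec$_\ell$ I build $v$ on top of $\sigma w \tau x$ --- all separators $0_\ell$ for zero$_\ell$, each separator chosen by Defender through an Or-gadget for dec$_\ell$ --- and then hand over to decOk$_\ell$; since the built $v$ has value $0$ in the zero case, decOk$_\ell$ verifies $\val(w)=1$ (part~(c)), while in the dec case the only Defender choice surviving decOk$_\ell$ is the one with $\val(v)=\val(w)-1$, which also forces $\val(w)\neq 0$ (part~(d)).

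The main obstacle is the faithfulness of these gadgets, i.e. the ``$\Rightarrow$'' directions, where one must show that bisimilarity forces the intended deterministic behaviour and leaves no room to cheat. Concretely, I expect two delicate points: (i) that the interaction between Defender's forcing (which lets Defender build and choose bits) and the Attacker's-forcing verifier decOk$_\ell$ (which lets Attacker punish any wrong guess) really collapses the disjunction over Defender's choices to the single correct counter; and (ii) that the loop produces the exact sequence of sub-counter values $m,\dots,0$ and hence the exact length --- neither stopping early (blocked by the $=1$ test of zero$_{\ell-1}$) nor running past zero (blocked by the nonzero test of dec$_{\ell-1}$) --- so that the constructed word is provably the unique ones, zero, or decremented counter. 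Both points are discharged using the induction hypothesis at level $\ell-1$ together with the gadget correctness of Lemmas~\ref{lem-gadgets}, \ref{lem-transducers} and~\ref{lem-check-counters}, but packaging them into a clean loop invariant is the technical heart of the argument.
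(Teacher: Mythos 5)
Your overall plan coincides with the paper's: the same mutual recursion between the four operations, the same rule skeleton (push a separator and a ones $(\ell-1)$-counter, then loop with Defender choosing between a further decrement and a terminating zero), the same use of Or-gadgets for building, And-gadgets for $\textsf{decOk}_\ell$, and the transducer comparison of Lemma~\ref{lem-check-counters} as the verifier, all wrapped in an induction on~$\ell$. Your discussion of the loop invariant --- the $=1$ test of $\textsf{zero}_{\ell-1}$ and the nonzero test of $\textsf{dec}_{\ell-1}$ pinning the sub-counter values to exactly $m,m-1,\dots,0$ and killing every spurious Defender branch --- is precisely what the paper's chain of equivalences establishes.

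There is, however, one concrete gap: your proposed order within a level, ``(b) first, \dots, finally (a)'', is not realizable with this verifier. The check branch of $\q{decOk$_\ell$ \sta}$ cannot compare ``the two topmost counters'' directly: Lemma~\ref{lem-transducers}(b), and hence Lemma~\ref{lem-check-counters}, needs \emph{three} counters $w_3\sigma_3 w_2\sigma_2 w_1\sigma_1$ on the stack, because the desynchronization that brings the left configuration to $w_1\sigma_1 x$ and the right to $w_2\sigma_2 w_1\sigma_1 x$ works by popping two blocks on the left at one action per symbol against one block on the right at two actions per symbol, so the topmost block $w_3\sigma_3$ is consumed as fuel. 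Starting from $v\sigma w\tau x$ with only two counters, the right-hand side would have to idle for the nonelementarily many steps the left needs to pop $v\sigma$; its finite control cannot count that far, and an internal self-loop destroys the determinism the macro rules rely on. The paper therefore defines the challenge branch of $\q{decOk$_\ell$ \sta}$ as: push $0_{\ell+1}$, run $\textsf{ones}_\ell$ to lay down a padding $\ell$-counter, and only then invoke $\textsf{testDec}_\ell$. Consequently part~(b) at level $\ell\ge 1$ depends on part~(a) at the \emph{same} level (at level~$0$ the padding $0_0^n 0_1$ is short enough to push explicitly), and the workable order is (a), then (b), then (c) and (d). Once you add the padding counter and reorder accordingly, the rest of your argument goes through as in the paper.
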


The following rules for the special case $\ell = 0$ are included.

\begin{align*}
  \q{ones$_0$ \sta} & \ctran{}     \q{\sta} 1_0^n            && \text{push a ones $0$-counter} \\
  \q{decOk$_0$ \sta} & \ctran{\Att} \left\{ \q{\sta} , \right.               && \begin{aligned} &\text{believe that the values of the} \\
                                                                                             &\text{top two $0$-counters differ by $1$} \end{aligned} \\
                     & \hspace{10mm}\left.  \q{testDec$_0$} 0_0^n0_{1} \right\}   && \begin{aligned} &\text{OR challenge that claim by} \\
                                                                                             &\text{invoking \textsf{testDec$_0$}} \end{aligned}\\
  \q{zero$_0$ \sta} & \ctran{}     \q{decOk$_0$ \sta} 0_0^n  && \begin{aligned} &\text{push a zero $0$-counter and check} \\
                                                                          &\text{if it is over a $0$-counter with value~$1$} \end{aligned}\\
  \q{dec$_0$ \sta}  & \ctran{\Def}
\left\{ \q{dec$_0^{(1)}$\sta}0_0,
\q{dec$_0^{(1)}$\sta}1_0\right\}&&
%\q{decOk$_0$ \sta} w \mid w \in \Omega_0^n \right\} &&
%\left\{ \q{decOk$_0$ \sta} w \mid w \in \Omega_0^n \right\} &&
\text{push the first bit of the}\\[-0.3cm]
&&&\text{decremented $0$-counter} \\
\forall 1\leq i<n:\q{dec$_0^{(i)}$\sta}  & \ctran{\Def}
\left\{ \q{dec$_0^{(i+1)}$\sta}0_0,
\q{dec$_0^{(i+1)}$\sta}1_0\right\}&&
%\q{decOk$_0$ \sta} w \mid w \in \Omega_0^n \right\} &&
%\left\{ \q{decOk$_0$ \sta} w \mid w \in \Omega_0^n \right\} &&
\text{push the $(i+1)^{\text{st}}$ bit of the}\\[-0.3cm]
&&&\text{decremented $0$-counter} \\
\q{dec$_0^{(n)}$\sta}  & \ctran{}
\q{decOk$_0$ \sta} &&
\text{verify if the $0$-counter has been}\\[-0.1cm]
&&&\text{correctly decremented}\\
\end{align*}

The following rules are included for $1\leq \ell \le k$.

\begin{align*}
\q{ones$_\ell$ \sta}   & \ctran{}     \q{ones$_{\ell-1}$  ones$_\ell^1$ \sta} 1_\ell && \text{push $1_\ell$ and a ones $(\ell-1)$-counter}\\
\q{ones$_\ell^1$ \sta} & \ctran{\Def} \left\{ \q{dec$_{\ell-1}$   ones$_\ell^1$ \sta} 1_\ell , \right.  && \text{push $1_\ell$ and a decremented $(\ell-1)$-counter} \\
                      & \hspace{10mm}\left.  \q{zero$_{\ell-1}$ \sta} 1_\ell         \right\} && \text{OR push $1_\ell$ and a zero $(\ell-1)$-counter} \\
  \q{decOk$_\ell$ \sta} & \ctran{\Att} \left\{ \q{\sta} , \right.               && \begin{aligned} &\text{believe that the values of the} \\
                                                                                             &\text{top two $\ell$-counters differ by $1$} \end{aligned} \\
                     & \hspace{10mm}\left.  \q{ones$_\ell$ testDec$_\ell$} 0_{\ell+1} \right\}   && \begin{aligned} &\text{OR challenge that claim by} \\
                                                                                             &\text{invoking \textsf{testDec$_\ell$}} \end{aligned} \\
  \q{zero$_\ell$ \sta}   & \ctran{}     \q{ones$_{\ell-1}$  zero$_\ell^1$ \sta} 0_\ell && \text{push $0_\ell$ and a ones $(\ell-1)$-counter}\\
  \q{zero$_\ell^1$ \sta} & \ctran{\Def} \left\{ \q{dec$_{\ell-1}$   zero$_\ell^1$ \sta} 0_\ell , \right.  && \text{push $0_\ell$ and a decr.\ $(\ell-1)$-counter} \\
                      & \hspace{10mm}\left.  \q{zero$_{\ell-1}$ decOk$_\ell$ \sta} 0_\ell         \right\} && \text{OR push $0_\ell$ and a zero $(\ell-1)$-counter} \\
\q{dec$_\ell$ \sta}   & \ctran{\Def} \left\{ \q{ones$_{\ell-1}$  dec$_\ell^1$ \sta} \sigma \mid \sigma \in \Omega_\ell \right\}  && \text{push from $\Omega_\ell$ and a ones $(\ell-1)$-counter} \\
\q{dec$_\ell^1$ \sta} & \ctran{\Def} \left\{ \q{dec$_{\ell-1}$   dec$_\ell^1$ \sta} \sigma , \right.  && \text{push from $\Omega_\ell$ and a decr.\ $(\ell-1)$-counter} \\
                     & \hspace{10mm}\left.  \q{zero$_{\ell-1}$  decOk$_\ell$ \sta} \sigma \mid \sigma \in \Omega_\ell \right\} && \text{OR push from $\Omega_\ell$ and a zero $(\ell-1)$-counter} \\
\end{align*}

%\begin{align*}
%  &\q{testDec$_\ell$}  \ctran{} \q{ones$_\ell$ testDec$_\ell^1$} 0_{\ell+1} && \text{push $0_{\ell+1}$ and an $\ell$-counter} \\
%\end{align*}

\begin{proof}[of Lemma~\ref{lem-build-counters}]
The proof is by induction on~$\ell$.

\noindent
{\em Induction base.}

Case (a) immediately follows from the rule for $\q{ones$_0$ \sta}$.

For part (b) we have:

\begin{align*}
\Bis{\q{decOk$_{0}$ \sta} v\sigma w\tau x}
& \quad\Longleftrightarrow\quad
\Bis{\q{testDec$_0$} 0_0^n 0_1v\sigma w\tau x}
\text{\ and\ }
\Bis{\q{\sta} v\sigma w\tau x}
&&(\text{rule for $\q{decOk$_0$\sta}$})\\
& \quad\Longleftrightarrow\quad
\val(v)+1=\val(w)
\text{\ and\ }
\Bis{\q{\sta} v\sigma w\tau x}
&&(\text{Lemma \ref{lem-check-counters}})\\
\end{align*}

For part (c) we have:

\begin{align*}
\Bis{\q{zero$_{0}$ \sta} \sigma w\tau x}
& \quad\Longleftrightarrow\quad
\Bis{\q{decOk$_0$} 0_0^n\sigma w\tau x}
&&(\text{rule for $\q{zero$_0$\sta}$ and part (b)})\\
& \quad\Longleftrightarrow\quad
\val(0_0^n)+1=\val(w)
\text{\ and\ }
\Bis{\q{\sta} v\sigma w\tau x}
&&(\text{Lemma \ref{lem-check-counters}})\\
& \quad\Longleftrightarrow\quad
\val(w)=1
\text{\ and\ }
\Bis{\q{\sta} v\sigma w\tau x}\text{, where}
&&\\
&\quad\phantom{\Longleftrightarrow}\quad\ \ v\text{ is the $0$-counter with $\val(v)=0$}&&
\end{align*}

For part (d) we have:

\begin{align*}
\Bis{\q{dec$_{0}$ \sta} \sigma w\tau x}
& \quad\Longleftrightarrow\quad
\exists v\in\Omega_0^n:\Bis{\q{decOk$_0$ \sta} v\sigma w\tau x}
&&(\text{rules for $\q{dec$_0$ \sta}$ and $\q{dec$_0^{(i)}$ \sta}$ })\\
& \quad\Longleftrightarrow\quad
\exists v\in\Omega_0^n:\val(v)+1=\val(w)
\text{\ and\ }
\Bis{\q{\sta} v\sigma w\tau x}
&&(\text{part (b)})\\
& \quad\Longleftrightarrow\quad
\val(w)\not=0\text{\ and\ }
\Bis{\q{\sta} v\sigma w\tau x}\text{, where $v$}\\
&
\quad\phantom{\Longleftrightarrow}\quad\ \   \text{is the $\ell$-counter with $\val(v)+1=\val(w)$}
&&
\end{align*}
\noindent
{\em Induction step.}
In the following let $\ell \in[0,k-1]$, let $x\in\Omega^*$, let $\sigma,\tau\in\Omega_{\ell+2}$ and let
$v,w$ be $(\ell+1)$-counters.
Let $m \defeq \Tow(n,\ell+1)-1$.
We write~$c_i$ for the $\ell$-counter with $\val(c_i) = i$ for each $i\in[0,m]$.

For part (a) we obtain the following equivalences:
%Using the definition of $\mathord{\ctran{\Def}}$ we have for all $x \in \Omega^*$:
\begin{align*}
\Bis{\q{ones$_{\ell+1}$ \sta} x}
& \quad\Longleftrightarrow\quad \Bis{\q{ones$_{\ell}$ ones$_{\ell+1}^1$ \sta} 1_{\ell+1} x} && (\text{rule for $\q{ones$_{\ell+1}$ \sta}$}) \\
& \quad\Longleftrightarrow\quad \Bis{\q{ones$_{\ell+1}^1$ \sta} c_m 1_{\ell+1} x} && (\text{ind.\ hyp.\ on (a)}) \\
& \quad\Longleftrightarrow\quad \Bis{\q{dec$_{\ell}$  ones$_{\ell+1}^1$ \sta} 1_{\ell+1} c_m 1_{\ell+1} x}  && (\text{rule for $\q{ones$_{\ell+1}^1$ \sta}$}) \\
& \qquad \quad\quad \text{ or } \Bis{\q{zero$_{\ell}$ \sta} 1_{\ell+1} c_m 1_{\ell+1} x} \\
&\quad\Longleftrightarrow\quad \Bis{\q{ones$_{\ell+1}^1$ \sta} c_{m-1} 1_{\ell+1} c_m 1_{\ell+1} x} &&
(\text{ind.\ hyp.\ on (d),(c),$m>1$}) \\
& \quad\Longleftrightarrow\quad \cdots \\
& \quad\Longleftrightarrow\quad \Bis{\q{dec$_{\ell}$  ones$_{\ell+1}^1$ \sta} 1_{\ell+1} c_1 \cdots 1_{\ell+1} c_m 1_{\ell+1} x} \\
& \qquad \quad\quad \text{ or } \Bis{\q{zero$_{\ell}$ \sta} 1_{\ell+1} c_1 \cdots 1_{\ell+1} c_m 1_{\ell+1} x} \\
& \quad\Longleftrightarrow\quad \Bis{\q{ones$_{\ell+1}^1$ \sta} c_0 1_{\ell+1} \cdots c_m 1_{\ell+1} x} && (\text{ind.\ hyp.\ on (d)}) \\
& \qquad \quad\quad \text{ or } \Bis{\q{\sta} \underbrace{c_0 1_{\ell+1} \cdots 1_{\ell+1} c_m 1_{\ell+1}}_\text{ones $(\ell+1)$-counter} x}   && (\text{ind.\ hyp.\ on (c)})
\end{align*}
Further we have
\begin{align*}
                      & \Bis{\q{ones$_{\ell+1}^1$ \sta} c_0 1_{\ell+1} \cdots c_m 1_{\ell+1} x} \\
 \quad\Longleftrightarrow\quad & \Bis{\q{dec$_{\ell}$  ones$_{\ell+1}^1$ \sta} 1_{\ell+1} c_0 1_{\ell+1} \cdots c_m 1_{\ell+1} x}  && (\text{rule for $\q{ones$_{\ell+1}^1$ \sta}$}) \\
                      & \text{ or } \Bis{\q{zero$_{\ell}$ \sta} 1_{\ell+1} c_0 1_{\ell+1} \cdots c_m 1_{\ell+1} x} \\
 \quad\Longleftrightarrow\quad & \mathit{false} && (\text{ind.\ hyp.\ on (d),(c)})
\end{align*}
Combining this with the equivalences above yields part~(a) for $\ell+1$.

For part (b) we obtain the following equivalences:
\begin{align*}
\Bis{\q{decOk$_{\ell+1}$ \sta} v\sigma w\tau x}
& \quad\Longleftrightarrow\quad \Bis{\q{\sta} v\sigma w\tau x}&&\\
&\quad\phantom{\Longleftrightarrow}\quad\text{ and }
\Bis{\q{ones$_{\ell+1}$testDec$_{\ell+1}$}0_{\ell+2}v\sigma w\tau x}
 && (\text{rule for $\q{decOk$_{\ell+1}$ \sta}$}) \\
& \quad\Longleftrightarrow\quad
 \Bis{\q{\sta}v\sigma w\tau x} \text{ and }
\Bis{\q{testDec$_{\ell+1}$}w'0_{\ell+2}v\sigma w\tau x}&&\\
&\quad\phantom{\Longleftrightarrow}\quad\ \
\text{where $w'$ is the ones $(\ell+1)$-counter}
&& (\text{part (a)})\\
&\quad\Longleftrightarrow\quad
\Bis{\q{\sta}v\sigma w\tau x}
\text{ and } \val(w)=\val(v)+1
 && (\text{Lemma \ref{lem-check-counters}}) \\
\end{align*}

For part (c) we obtain the following equivalences:
%Using the definition of $\mathord{\ctran{\Def}}$ we have for all $x \in \Omega^*$:
\begin{align*}
& \Bis{\q{zero$_{\ell+1}$\sta}\sigma w\tau x} \\
&\quad\Longleftrightarrow\quad
\Bis{\q{ones$_\ell$zero$_{\ell+1}^1$ \sta}0_{\ell+1}\sigma w\tau x}
 && (\text{rule for }\q{zero$_{\ell+1}$ \sta})\\
&\quad\Longleftrightarrow\quad
\Bis{\q{zero$_{\ell+1}^1$ \sta}c_m0_{\ell+1}\sigma w\tau x}
 && (\text{ind. hyp. on (a)})\\
&\quad\Longleftrightarrow\quad
\Bis{\q{dec$_\ell$zero$_{\ell+1}^1$ \sta}0_{\ell+1} c_m 0_{\ell+1}\sigma w\tau x}&&\\
&\quad\phantom{\Longleftrightarrow}\quad\ \text{ or }
\Bis{\q{zero$_\ell$decOk$_{\ell+1}$ \sta}0_{\ell+1}c_m0_{\ell+1}\sigma w\tau x}
 && (\text{rule for $\q{zero$_{\ell+1}^1$ \sta}$})\\
&\quad\Longleftrightarrow\quad
\Bis{\q{zero$_{\ell+1}^1$ \sta}c_{m-1}0_{\ell+1}c_m0_{\ell+1}\sigma w\tau x}
 && (\text{ind. hyp. on (d),(c) and $m>1$)})\\
&\quad\Longleftrightarrow\quad\cdots &&\\
&\quad\Longleftrightarrow\quad
\Bis{\q{dec$_\ell$zero$_{\ell+1}^1$ \sta}0_{\ell+1}c_1\cdots 0_{\ell+1}c_m0_{\ell+1}\sigma w \tau x}
&&\\
&\quad\phantom{\Longleftrightarrow}\quad\ \text{ or }
\Bis{\q{zero$_\ell$decOk$_{\ell+1}$ \sta}0_{\ell+1}c_1\cdots 0_{\ell+1}c_m0_{\ell+1}\sigma w\tau x}\\
&\quad\Longleftrightarrow\quad
\Bis{\q{zero$_{\ell+1}^1$ \sta}c_00_{\ell+1}c_1\cdots 0_{\ell+1}c_m0_{\ell+1}\sigma w \tau x}
&&\\
&\quad\phantom{\Longleftrightarrow}\quad\ \text{ or }
\Bis{\q{decOk$_{\ell+1}$ \sta}c_00_{\ell+1}c_1\cdots 0_{\ell+1}c_m0_{\ell+1}\sigma w \tau x}
 && (\text{ind. hyp. on (d),(c)})\\
&\quad\Longleftrightarrow\quad
\Bis{\q{zero$_{\ell+1}^1$ \sta}c_00_{\ell+1}c_1\cdots 0_{\ell+1}c_m0_{\ell+1}\sigma w \tau x}
&&\\
&\quad\phantom{\Longleftrightarrow}\quad\ \text{ or }
(\Bis{\q{\sta}\underbrace{c_00_{\ell+1}c_1\cdots 0_{\ell+1}c_m0_{\ell+1}}_{\text{zero $(l+1)$-counter}}\sigma w \tau x}\ \text{and}\ \val(w)=1
 && (\text{part (b)})
\end{align*}
Further we have
\begin{align*}
 & \Bis{\q{zero$_{\ell+1}^1$ \sta}c_00_{\ell+1}c_1\cdots 0_{\ell+1}c_m0_{\ell+1}\sigma w \tau x}&&\\
 \quad\Longleftrightarrow\quad &
\Bis{\q{dec$_{\ell}$  zero$_{\ell+1}^1$ \sta}0_{\ell+1}c_00_{\ell+1}c_1\cdots 0_{\ell+1}c_m0_{\ell+1}\sigma w \tau x}&&\\
&%\quad\phantom{\Longleftrightarrow}\quad\
\text{ or }
\Bis{\q{zero$_\ell$decOk$_{\ell+1}$ \sta}0_{\ell+1}c_00_{\ell+1}c_1\cdots 0_{\ell+1}c_m0_{\ell+1}\sigma w \tau x}
&& (\text{rule for $\q{zero$_{\ell+1}^1$ \sta}$})\\
\quad\Longleftrightarrow\quad& \mathit{false}
&&\text{(ind. hyp. on (d),(c))}
%                      & \text{ or } \Bis{\q{zero$_{\ell}$ \sta} 1_{\ell+1} c_0 1_{\ell+1} \cdots c_m 1_{\ell+1} x} \\
% \quad\Longleftrightarrow\quad & \mathit{false} && (\text{ind.\ hyp.\ on (d),(c)})
\end{align*}
Combining this with the equivalences above yields part~(c) for $\ell+1$.

For part (d) we obtain the following equivalences:
\begin{align*}
\Bis{\q{dec$_{\ell+1}$\sta}\sigma w\tau x}
&\quad\Longleftrightarrow\quad \exists\sigma_m\in\Omega_{\ell+1}:
\Bis{\q{ones$_{\ell}$dec$_{\ell+1}^1$ \sta}\sigma_m\sigma w\tau x}
&&\text{(rule for $\q{dec$_{\ell+1}$ \sta}$)}\\
&\quad\Longleftrightarrow\quad \exists\sigma_m\in\Omega_{\ell+1}:
\Bis{\q{dec$_{\ell+1}^1$ \sta}c_m\sigma_m\sigma w\tau x}
&&\text{(ind. hyp. on (a))}\\
&\quad\Longleftrightarrow\quad \exists\sigma_{m-1},\sigma_m\in\Omega_{\ell+1}:&&\\
&\quad\phantom{\Longleftrightarrow}\quad\quad
\Bis{\q{dec$_\ell$dec$_{\ell+1}^1$ \sta}\sigma_{m-1}c_m\sigma_m\sigma w\tau x}\\
&\quad\phantom{\Longleftrightarrow}\quad\quad
\text{ or }\Bis{\q{zero$_{\ell}$decOk$_{\ell+1}$ \sta}\sigma_{m-1}c_m\sigma_m\sigma w\tau x}
&&\text{(rule for $\q{dec$_{\ell+1}^1$ \sta}$)}\\
&\quad\Longleftrightarrow\quad \exists\sigma_{m-1},\sigma_m\in\Omega_{\ell+1}:&&\\
&\quad\phantom{\Longleftrightarrow}\quad\quad
\Bis{\q{dec$_{\ell+1}^1$ \sta}c_{m-1}\sigma_{m-1}c_m\sigma_m\sigma w\tau x}
&&\text{(ind. hyp. on (d),(c),$m>1$)}\\
&\quad\Longleftrightarrow\quad\cdots &&\\
&\quad\Longleftrightarrow\quad \exists\sigma_{1},\ldots,\sigma_m\in\Omega_{\ell+1}:&&\\
&\quad\phantom{\Longleftrightarrow}\quad\quad
\Bis{\q{dec$_{\ell+1}^1$ \sta}c_{1}\sigma_{1}\cdots c_m\sigma_m\sigma w\tau x}&&\\
&\quad\Longleftrightarrow\quad \exists\sigma_{0},\ldots,\sigma_m\in\Omega_{\ell+1}:&&\\
&\quad\phantom{\Longleftrightarrow}\quad\quad
\Bis{\q{dec$_\ell$dec$_{\ell+1}^1$ \sta}\sigma_0c_1\sigma_1\cdots c_m\sigma_m\sigma w\tau x}\\
&\quad\phantom{\Longleftrightarrow}\quad\quad
\text{ or }
\Bis{\q{zero$_\ell$decOk$_{\ell+1}$ \sta}\sigma_0c_1\sigma_1\cdots c_m\sigma_m\sigma w\tau x}
&&\text{(rule for $\q{dec$_{\ell+1}^1$}$)}\\
&\quad\Longleftrightarrow\quad \exists\sigma_{0},\ldots,\sigma_m\in\Omega_{\ell+1}:&&\\
&\quad\phantom{\Longleftrightarrow}\quad\quad
\Bis{\q{dec$_\ell$dec$_{\ell+1}^1$ \sta}\sigma_0c_1\sigma_1\cdots c_m\sigma_m\sigma w\tau x}&&\\
&\quad\phantom{\Longleftrightarrow}\quad\quad
\text{ or }
\Bis{\q{decOk$_{\ell+1}$ \sta}c_0\sigma_0c_1\sigma_1\cdots c_m\sigma_m\sigma w\tau x}
&&\text{(ind. hyp. on (c)}\\
&\quad\Longleftrightarrow\quad \exists\sigma_{0},\ldots,\sigma_m\in\Omega_{\ell+1}:&&\\
&\quad\phantom{\Longleftrightarrow}\quad\quad
\Bis{\q{dec$_\ell$dec$_{\ell+1}^1$ \sta}\sigma_0c_1\sigma_1\cdots c_m\sigma_m\sigma w\tau x}&&\\
&\quad\phantom{\Longleftrightarrow}\quad\quad
\text{ or }
(\Bis{\q{\sta}c_0\sigma_0c_1\sigma_1\cdots c_m\sigma_m\sigma w\tau x}&&\\
&\quad\phantom{\Longleftrightarrow}\quad\quad\qquad
\text{ and }\val(w)=\val(c_0\sigma_0\cdots c_m\sigma_m)+1) &&\text{$(\star)$}\\
&&&\text{(ind. hyp. on (b))}
\end{align*}
Further we have for all $\sigma_0,\ldots,\sigma_m\in\Omega_{\ell+1}$:
\begin{align*}
 & \Bis{\q{dec$_{\ell}$dec$_{\ell+1}^1$ \sta}
\sigma_0c_1\sigma_1\cdots c_m\sigma_m\sigma w\tau x}&\\
\Longleftrightarrow\quad &
\Bis{\q{dec$_{\ell+1}^1$ \sta}c_0\sigma_0c_1\sigma_1\cdots c_m\sigma_m\sigma w\tau x}
&\text{(ind. hyp. on (d))}\\
\Longleftrightarrow\quad &
\exists\sigma_{-1}\in\Omega_{\ell+1}:
\Bis{\q{dec$_\ell$dec$_{\ell+1}^1$\sta}\sigma_{-1}c_0\sigma_0c_1\sigma_1\cdots c_m\sigma_m\sigma w\tau x}&\\
\phantom{\Longleftrightarrow}\quad &
\qquad\qquad\qquad\text{or}\ \Bis{\q{zero$_\ell$decOk$_{\ell+1}$ \sta}
\sigma_{-1}c_0\sigma_0c_1\sigma_1\cdots c_m\sigma_m\sigma w\tau x}\\
\Longleftrightarrow\quad &
\mathit{false}&\text{(ind. hyp. on (d),(c))}
\end{align*}
Hence $(\star)$ is equivalent to
$$
\exists\sigma_0,\ldots,\sigma_m\in\Omega_{\ell+1}:
\Bis{\q{\sta}c_0\sigma_0c_1\sigma_1\cdots c_m\sigma_m\sigma w\tau x}\
\text{ and }\ \val(w)=\val(c_0\sigma_0\cdots c_m\sigma_m)+1
$$
which is in turn equivalent to
      $$\val(w) \ne 0\ \text{ and } \Bis{\q{\sta} v \sigma w \tau x},\ \text{where $v$ is the $\ell$-counter with $\val(v) + 1 = \val(w)$},$$
which shows part~(d) for $\ell+1$.
\qed
\end{proof}

\subsection{Simulating a Transducer Machine}{\label{sec-simulating}}

Returning to our overall reduction, let us recall we have fixed
 a deterministically
terminating transducer machine
$\T=(\Tow(k,n),T_1,T_2)$ with respect to which
$0^{\Tow(k,n)}$ is a dead end, i.e. we have the following in total:
\begin{itemize}
\item $T_1,T_2:\{0,1\}^*\rightarrow\Upsilon^*$ are letter-to-letter transducers,
\item for each $z\in\{0,1\}^{\Tow(k,n)}$ there is at most one $z'$ with
$T_1(z)=T_2(z')$,
\item assume $z_0,\ldots,z_t\in\{0,1\}^{\Tow(k,n)}$ such that
$z_0=1^{\Tow(k,n)}$ and $T_1(z_i)=T_2(z_{i+1})$ for each $i\in[0,t-1]$,
and $z_t$ is a dead end (we defined $\Last(\T) \defeq z_t$), and
\item $0^{\Tow(k,n)}$ is a dead end with respect to $\T$.
\end{itemize}

We include rules so that $\Bis{\q{start}}$ holds if and only if
 $\Last(\T)=0^{\Tow(k,n)}$,
 thus completing our reduction.
The PDA~$\P$ will be able to push $z_0, z_1, \ldots$ on the stack,
 where each word~$z_i$ is encoded as a $k$-counter, say~$d_i$, in the obvious way:
 $z_i = \eta(d_i)$ where $\eta: \Omega^* \to \Omega_k^*$ denotes the homomorphism with
  $\eta(\sigma) = \sigma$ for $\sigma \in \Omega_k$ and
  $\eta(\sigma) = \varepsilon$ otherwise.
We emphasize that in comparison to the counters that were present in the proof
of Lemma \ref{lem-build-counters} for each $i\in[0,t]$ we do not generally have $\val(d_i)=i$: Instead
we have $z_i=\eta(d_i)$, in particular the sequence $z_0,\ldots,z_t$ and thus the sequence
$d_0,\ldots,d_t$ is determined.
The $d_i$ will be separated on the stack by the symbol $\$ \defeq 0_{k+1}$.
We include rules such that the following holds:

\begin{lemma} \label{lem-high-level}
Let $x \in \Omega^*$ and let $w_1, w_2, w_3$ be $k$-counters.
\begin{itemize}
 \item[(a)]
  Then $\Bis{\q{testFin} w_1 \$ x}$ iff $\eta(w_1)=0^{\Tow(k,n)}$.%$T(\eta(w_1)) \in d^*$.
 \item[(b)]
  Then $\Bis{\q{testTran} w_3 \$ w_2 \$ w_1 \$ x}$
   iff $T_1(\eta(w_1)) = T_2(\eta(w_2))$.
 \item[(c)]
  Then $\Bis{\q{start}}$
   iff  $z_t=0^{\Tow(k,n)}$.
\end{itemize}
\end{lemma}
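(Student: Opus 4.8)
The plan is to reduce parts (a) and (b) to the transducer-comparison machinery already developed for \textsf{testDec} (Lemma~\ref{lem-check-counters} via Lemma~\ref{lem-transducers}), and to realise part (c) as a deterministic iteration built on top of them, using Defender's forcing \eqref{eq-def} to let Defender drive the chain of~$\T$ and Attacker's forcing \eqref{eq-att} to let Attacker check each step.

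For (b) I would lift the binary transducers $T_1,T_2$ to transducers $\hat T_1,\hat T_2:\Omega^*\to\Act^*$ operating on the $\Omega_k$-bits of a $k$-counter: relabel $0\mapsto 0_k$, $1\mapsto 1_k$, treat $\Omega_{k+1}=\{\$\}$ as a terminator exactly as $T^{+0}_\ell,T^{+1}_\ell$ treat $\Omega_{\ell+1}$, and form the shuffle $\hat T_j\defeq T_j\shuf(\Omega_{\le k-1}\mapsto d)$ with a dummy letter~$d$. Because $k$-counters are rigid and of equal length, the $\Omega_k$-positions of $w_1$ and $w_2$ align, so $\hat T_1(w_1\$)=\hat T_2(w_2\$)$ iff $T_1(\eta(w_1))=T_2(\eta(w_2))$. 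I then include for \textsf{testTran}, \textsf{testTran}$^1$, \textsf{stop}$_k$ exactly the rule pattern of \textsf{testDec}$_\ell$ at level $\ell=k$ (with $\hat T_1,\hat T_2$ in place of $T^{+0}_k,T^{+1}_k$ and $\$$ in the role of the $\sigma_i\in\Omega_{k+1}$), after which (b) follows verbatim as in Lemma~\ref{lem-check-counters}, invoking Lemma~\ref{lem-transducers}(b) together with \eqref{eq-stop}. Part (a) is the single-counter special case: \textsf{testFin} applies on both sides a macro $p\,L\btran{U}q$ with $L=(\Omega_{\le k-1}^*\Omega_k)^*\Omega_{k+1}$, where the ${}_\bullet$-side transducer maps each $\Omega_k$-bit to its value and the $\bullet$-side transducer always outputs $0$, both ending in \textsf{stop}$_k$; by \eqref{eq-trans} the two deterministic read-paths are bisimilar iff their outputs agree, i.e.\ iff every $\Omega_k$-bit of $w_1$ equals $0_k$, i.e.\ iff $\eta(w_1)=0^{\Tow(k,n)}$.

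For (c) I would wire the loop as follows. Rule \q{start} pushes $\$$, builds a ones $k$-counter $d_0$ via \textsf{ones}$_k$ (Lemma~\ref{lem-build-counters}(a), \eqref{eq-circ}) with $\eta(d_0)=1^{\Tow(k,n)}=z_0$, and enters \textsf{next}, so $\Bis{\q{start}}$ iff $\Bis{\q{next} d_0 \$}$. At \textsf{next}, Defender's forcing \eqref{eq-def} offers \textsf{testFin} (claim the top counter is the zero word) versus \textsf{tran} (continue). Rule \textsf{tran} lets Defender push an arbitrary valid $k$-counter $d'$ — a variant of the \textsf{dec}$_k$-style rules of Lemma~\ref{lem-build-counters} in which the lower sub-counters are forced to enumerate $0,\dots,\Tow(k,n)-1$ while the $\Omega_k$-bits are chosen freely — and then, via Attacker's forcing \eqref{eq-att}, lets Attacker either challenge the step by \textsf{testTran} (checking $T_1(\eta(d_i))=T_2(\eta(d'))$) or continue in \textsf{next} on $d'$. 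The target invariant is
\[
 \Bis{\q{next} d_i \$ x}\quad\iff\quad\text{the deterministic chain of~$\T$ started at $\eta(d_i)$ reaches $0^{\Tow(k,n)}$,}
\]
which I prove by induction on the remaining chain length $t-i$: the continue-branch (And-gadget) forces the \emph{unique} valid successor by the ``at most one $z'$'' clause, so $\Bis{\q{tran} d_i \$ x}$ holds iff $\eta(d_i)$ has a successor and, by (b) and the induction hypothesis, the chain from it reaches $0$; the finish-branch uses (a). Instantiating at $i=0$ and using that $0^{\Tow(k,n)}$ is a dead end (so the chain stops at, never passes through, $0$) gives $\Bis{\q{start}}$ iff $z_t=0^{\Tow(k,n)}$.

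The main obstacle is the correctness of the iteration, not of (a) and (b). Two points need care. First, Defender must not be able to win by playing forever (infinite plays are Defender-won): this is exactly where deterministic termination of~$\T$ is used, since it forbids an infinite run of valid transitions and so forces Defender eventually to commit to a finish, which succeeds only at a dead end and hence, by assumption, only at $0^{\Tow(k,n)}$. Second, the alternation must be faithful — Defender proposes $d'$, Attacker checks it — which relies on the determinism clause to pin $d'$ to the single correct successor and on the correct interplay of the two forcing gadgets. Designing the \textsf{tran} rules so that Defender's only freedom lies in the $\Omega_k$-bits while the lower counter structure is forced is the most delicate step, but it is a direct adaptation of the building rules of Lemma~\ref{lem-build-counters}.
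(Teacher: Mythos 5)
Your proposal is correct and follows essentially the same route as the paper: parts (a) and (b) are obtained by feeding the stack through shuffled letter-to-letter transducers via the macro rules and Lemma~\ref{lem-transducers} (with \eqref{eq-stop} supplying the required bisimilar continuations), and part (c) is the same finish-or-continue loop in which Defender's forcing pushes the next $k$-counter, Attacker's forcing triggers the \textsf{testTran} check, determinism of $\T$ pins the successor, and the dead-end properties of $z_t$ and $0^{\Tow(k,n)}$ terminate and collapse the unrolling. Your phrasing of (c) as an induction on the remaining chain length is just a repackaging of the paper's iterated chain of equivalences, so no substantive difference remains.
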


For Lemma~\ref{lem-high-level} we include the following rules:

\begin{align*}
 \q{start} & \ctran{} \q{ones$_k$ fin} \$ && \text{push $\$$ and the encoded $z_0$} \\
 \q{fin}   & \ctran{\Def} \{ \q{testFin} , \ \q{next} \$ \} && \text{test $0^{\Tow(k,n)}$ OR do the next~$z_i$} \\
 \ql{testFin} \left( \Omega_{\le k-1}^* \Omega_k \right)^* \$ & \btran{\{1_k\}\mapsto b\ \shuf\
(\Omega \setminus \{1_k\}) \mapsto a} \ql{popAll}
    && \text{rule {\`a} la Lemma~\ref{lem-transducers}~(a) to test $0^{\Tow(k,n)}$} \\
 \qr{testFin} \left( \Omega_{\le k-1}^* \Omega_k \right)^* \$ & \btran{\Omega \mapsto a} \qr{popAll}
&&\text{rule {\`a} la Lemma~\ref{lem-transducers}~(a) to test $0^{\Tow(k,n)}$}\\
 \ql{popAll} \omega & \btran{a} \ql{popAll} &&\text{for all $\omega\in\Omega$: Erase stack content}\\
 \qr{popAll} \omega & \btran{a} \qr{popAll} &&\text{for all $\omega\in\Omega$: Erase stack content}\\
 \q{next}     & \ctran{\Def} \left\{ \q{ones$_{k-1}$ next$^1$} \sigma \mid \sigma \in \Omega_k \right\}  && \text{choose the next~$z_i$} \\
 \q{next$^1$} & \ctran{\Def} \left\{ \q{dec$_{k-1}$  next$^1$} \sigma , \right.  \\
              & \hspace{10mm}\left.  \q{zero$_{k-1}$  tran} \sigma \mid \sigma \in \Omega_k \right\} \\
 \q{tran}     & \ctran{\Att} \left\{ \q{ones$_k$ testTran} \$, \ \right. && \text{test whether new $z_i$ is ok} \\
              & \hspace{10mm}\left.  \q{fin} \right\}      && \text{OR continue} \\
 \ql{testTran} \Omega_{\le k}^* \$  \Omega_{\le k}^* \$
         & \btran{\Omega \mapsto a} \ql{testTran$^1$} && \text{rule {\`a} la Lemma~\ref{lem-transducers}}\\
 \qr{testTran} \Omega_{\le k}^* \$
         & \btran{\Omega \mapsto aa} \qr{testTran$^1$}&& \text{rule {\`a} la Lemma~\ref{lem-transducers}}\\
 \ql{testTran$^1$} \left( \Omega_{\le k-1}^* \Omega_k \right)^* \$
         & \btran{T_1\ \shuf\ ((\Omega \setminus \Omega_k) \mapsto a)} \ql{stop$_k$} && \text{rule {\`a} la Lemma~\ref{lem-transducers} for $T_1$}\\
 \qr{testTran$^1$} \left( \Omega_{\le k-1}^* \Omega_k \right)^* \$
         & \btran{T_2\ \shuf\ ((\Omega \setminus \Omega_k) \mapsto a)} \qr{stop$_k$} && \text{rule {\`a} la Lemma~\ref{lem-transducers} for $T_2$} \\
\end{align*}

\begin{proof}[of Lemma~\ref{lem-high-level}]
Parts (a) and~(b) are simple consequences of the above rules and Lemma~\ref{lem-transducers} (a) and~(b),
respectively.
For part~(c) we have:
\begin{align*}
                        \Bis{\q{start}}
&  \quad\Longleftrightarrow\quad  \Bis{\q{fin}} d_0 \$ &&                                     \text{rule for $\q{start}$ and Lemma~\ref{lem-build-counters}~(a)} \\
&  \quad\Longleftrightarrow\quad \eta(d_0)=0^{\Tow(k,n)} \text{ or } \Bis{\q{next} \$ d_0 \$} && \text{rule for $\q{fin}$ and part~(a)}
\end{align*}
By the rules for $\q{next}$ and $\q{next$^1$}$ and reasoning as in Lemma~\ref{lem-build-counters} we have
 $$\text{$\Bis{\q{next} \$ d_0 \$}$
 $\quad\Longleftrightarrow\quad$ there is a $k$-counter~$w_1$ with $\Bis{\q{tran} w_1 \$ d_0 \$}$}.$$
By the rules for $\q{tran}$ and part~(b) we have
$$\text{$\Bis{\q{tran} w_1 \$ d_0 \$}$ $\quad\Longleftrightarrow\quad$
$w_1 = d_1$\ and\ $\Bis{\q{fin} w_1 \$ d_0 \$}$}.$$
It follows that:
\begin{align*}
  \Bis{\q{next} \$ d_0 \$}
& \quad\Longleftrightarrow\quad \Bis{\q{fin} d_1 \$ d_0 \$} \\
& \quad\Longleftrightarrow\quad \eta(d_1)=0^{\Tow(k,n)}\ \text{ or }\ \Bis{\q{next} \$ d_1 \$ d_0 \$} && \text{rule for $\q{fin}$ and part~(a)}
\end{align*}
Combining this with the equivalences above we obtain:
\begin{align*}
                        \Bis{\q{start}}
& \quad\Longleftrightarrow\quad  \bigvee_{i=0}^1 \eta(d_i)=0^{\Tow(k,n)}\ \text{ or }\ \Bis{\q{next} \$ d_1 \$ d_0 \$}\\
\end{align*}
By iterating this reasoning, we obtain:
\begin{align*}
                        \Bis{\q{start}}
& \quad\Longleftrightarrow\quad  \bigvee_{i=0}^t \eta(d_i)=0^{\Tow(k,n)}\ \text{ or }\
\Bis{\q{next} \$ d_t \$ d_{t-1} \$ \cdots d_0 \$}\\
& \quad\Longleftrightarrow\quad \bigvee_{i=0}^t z_i=0^{\Tow(k,n)}\ \text{ or }\ \Bis{\q{next} \$ d_t \$ d_{t-1} \$ \cdots d_0 \$}\\
& \quad\Longleftrightarrow\quad z_t=0^{\Tow(k,n)}\ \text{ or }\ \Bis{\q{next} \$ d_t \$ d_{t-1} \$ \cdots d_0 \$},
\end{align*}
where the last equivalence follows from the fact that $0^{\Tow(k,n)}$ is a dead end.
%By the rules for $\q{next}$ and $\q{next$^1$}$ and reasoning as in Lemma~\ref{lem-build-counters} we have $\Bis{\q{next} \$ c_t \$ c_{t-1} \$ \cdots c_0 \$}$
% iff there is a $k$-counter~$w$ with $\Bis{\q{tran} w \$ c_t \$ c_{t-1} \$ \cdots c_0 \$}$.
%By the rules for $\q{tran}$ and part~(b) we have $\Bis{\q{tran} w_1 \$ c_0 \$}$ iff $w_1 = c_1$ and $\Bis{\q{fin} w_1 \$ c_0 \$}$.
By reasoning as above we have:
\begin{align*}
  \Bis{\q{next} \$ d_t \$ \cdots d_0 \$}
& \quad\Longrightarrow\quad \text{there is a $k$-counter $w$ with } \Bis{\q{tran} w \$ d_t \$ \cdots d_0 \$} \\
& \quad\Longrightarrow\quad T_1(\eta(d_t)) = T_2(\eta(w)) \;,
\end{align*}
which is false, by definition of~$t$.
We conclude that $\Bis{\q{start}}$ holds iff $z_t=0^{\Tow(k,n)}$.
\qed
\end{proof}

By Lemma~\ref{lem-high-level}~(c) we have completed the reduction, and hence the proof of Theorem~\ref{thm-main}.

\subsection{Normedness} \label{sub-normedness}

We can strengthen Theorem~\ref{thm-main}.
Given a PDA $\P$ with control state set $Q$, we say a state $s$ of $\mathcal{S}(\P)$ is {\em normed} if
every state $t$ that is reachable from $s$ can reach some deadlock $t'$, where
$t'=q$ for some $q\in Q$ (i.e. the stack is empty).

\begin{theorem} \label{thm-normedness}
 PDA bisimilarity for is nonelementary, even when the initial states are normed.
\end{theorem}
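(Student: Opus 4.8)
PDA bisimilarity is nonelementary, even when the initial states are normed.

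The plan is to reuse the entire construction from Sections~\ref{sec-consecutive}--\ref{sec-simulating} verbatim and then patch the two configurations $\ql{start}$ and $\qr{start}$ so that every reachable state can drain its stack and halt with an empty stack, without disturbing the bisimilarity equivalence that drives the reduction. The key observation is that the construction already contains a mechanism for emptying the stack, namely the control states $\ql{popAll}$ and $\qr{popAll}$ together with the rules $\ql{popAll}\,\omega\btran{a}\ql{popAll}$ and $\qr{popAll}\,\omega\btran{a}\qr{popAll}$ for all $\omega\in\Omega$. Once in $\ql{popAll}$ (resp.\ $\qr{popAll}$), the system pops its stack one symbol at a time and, when the stack is empty, reaches the bare control state $\ql{popAll}$ (resp.\ $\qr{popAll}$), which by design has no further outgoing rules and is therefore a deadlock with empty stack. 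Thus $\ql{popAll}$ and $\qr{popAll}$ are already normed in the required sense.

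The main obstacle is that normedness must hold for \emph{every} state reachable from $\ql{start}$, including all the intermediate control states used in the counter-building and transducer-simulation phases (the states indexed by words $\beta\in B^*$, as well as the implicit states in $B_\mathit{impl}$ introduced by the macro rules). Many of these states have large or tightly-controlled sets of outgoing rules, so I cannot simply add arbitrary draining transitions: a new rule could create a spurious move that Attacker or Defender exploits, breaking the equivalence $\Bis{\q{start}}\Leftrightarrow\Last(\T)=0^{\Tow(k,n)}$. My approach would be to introduce a fresh action symbol $\downarrow\notin\Act$ and, for every control state $c$ appearing in the construction, add a rule $c\,\omega\btran{\downarrow}\ml{popAll}_c$ (and symmetrically for the $\bullet$-variants) routing into a private copy of the pop-all gadget, or more simply a rule $c\btran{\downarrow}\ql{popAll}$ and $c\btran{\downarrow}\qr{popAll}$ depending on the side. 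Because $\downarrow$ is a brand-new action, it is \emph{balanced} between the two sides: whenever Attacker plays $\downarrow$ from the left pebble, Defender answers with the matching $\downarrow$ from the right pebble, and both pebbles then enter the pop-all phase on identical stack contents. Since the two stacks are always equal on the main diagonal of the game (the configurations compared are always $\ql{\beta}\,x$ versus $\qr{\beta}\,x$ with the \emph{same} $x$), the resulting pop-all plays on the two sides erase identical stacks and reach deadlock simultaneously, so the Defender wins every $\downarrow$-continuation and the bisimilarity value is unchanged.

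Concretely, I would argue in two steps. \textbf{Normedness:} every state reachable from $\ql{start}$ or $\qr{start}$ now has a $\downarrow$-transition into a pop-all state, from which the stack is emptied and a bare-control-state deadlock is reached; hence every reachable state can reach an empty-stack deadlock, as required. \textbf{Preservation of the equivalence:} I extend the bisimulation relation witnessing $\Bis{\q{start}}$ by closing it under the new $\downarrow$-moves, observing that adding to both $\ql{c}$ and $\qr{c}$ the \emph{same} new action $\downarrow$ leading to configurations with identical stacks preserves the game-theoretic analysis of each gadget in Section~\ref{sec-techniques}; no existing Or-gadget or And-gadget acquires a new winning option for either player, because $\downarrow$ is answered deterministically and leads to a clearly-Defender-won subgame. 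The hardest part is the bookkeeping: verifying that no state in $B_\mathit{impl}$, particularly the product states $Q_A\times Q_T$ of the transducer macro rules in \eqref{eq-trans}, is missed and that adding $\downarrow$ does not interfere with the ``no other rules'' side conditions stipulated in the macro-rule definitions. Since these are syntactic conditions on the rule set, I would restate them to exempt the uniformly-added $\downarrow$-rules, after which the entire argument of Lemmas~\ref{lem-transducers}, \ref{lem-check-counters}, \ref{lem-build-counters}, and \ref{lem-high-level} goes through unchanged, giving $\Bis{\q{start}}\Leftrightarrow\Last(\T)=0^{\Tow(k,n)}$ with both $\ql{start}$ and $\qr{start}$ normed.
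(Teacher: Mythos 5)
Your patch rests on the claim that ``the two stacks are always equal on the main diagonal of the game (the configurations compared are always $\ql{\sta}\,x$ versus $\qr{\sta}\,x$ with the same $x$),'' and that is exactly where the proposal breaks down. The construction deliberately compares configurations with \emph{different} stack contents: \eqref{eq-stop} requires $\ql{stop$_\ell$}\,x \sim \qr{stop$_\ell$}\,w\sigma x$, Lemma~\ref{lem-transducers}~(a) compares ${}_\bullet q\, w_1\sigma_1 x_1$ with $q_\bullet\, w_2\sigma_2 x_2$, and the $\ms{testDec}_\ell$ and $\ms{testTran}$ macro rules pop two counters on one side but only one on the other, precisely so that two \emph{distinct} counters end up being transduced against each other. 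If you add a fresh action $\downarrow$ from \emph{every} control state (including the intermediate states of the macro rules) into the $\ms{popAll}$ gadget, Attacker can play $\downarrow$ at any such asymmetric point; $\ms{popAll}$ emits one $a$ per popped symbol and deadlocks on the empty stack, so the two sides reach deadlock after different numbers of steps and Attacker wins. In particular your modification falsifies \eqref{eq-stop} itself (the right-hand stack is longer by $|w\sigma|$), which destroys Lemma~\ref{lem-transducers}~(b) and with it Lemmas~\ref{lem-check-counters}, \ref{lem-build-counters} and~\ref{lem-high-level}. Making a drain action available everywhere turns stack height into a bisimulation invariant, which is incompatible with a construction whose engine is the comparison of unequal stacks.

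The paper's proof is more surgical: it observes that the \emph{only} source of non-normedness is the deliberately dead states $\ql{stop$_\ell$}, \qr{stop$_\ell$}$ (every other reachable configuration can already reach $\ms{popAll}$ and hence an empty-stack deadlock), and it repairs exactly these by adding \emph{asymmetric} rules that pop one counter at ``half speed'' ($\Omega\mapsto aa$) on the left and two counters at ``full speed'' ($\Omega\mapsto a$) on the right, so that the two stacks are re-equalised in height before $\ms{popAll}$ is entered. This in turn requires guaranteeing that the residual stack under $\ql{stop$_\ell$}$ really does begin with another $\ell$-counter, which is arranged by pushing an extra ones $k$-counter at the start and by introducing the $\ms{zOnes}$ states so that decrement tests never occur against a bare ones counter. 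Your proposal is missing both the diagnosis of which states are actually non-normed and, more importantly, any mechanism for compensating the height difference between the two sides; that cannot be repaired by merely restating the ``no other rules'' side conditions to exempt the new $\downarrow$-rules.
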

\begin{proof}
Recall that \eqref{eq-stop} requires for $x \in \Omega^*$ and $\sigma \in \Omega_{\ell+1}$ and $\ell$-counters~$w$ that we have
\begin{equation*}
 \ql{stop$_\ell$} x \sim \qr{stop$_\ell$} w \sigma x \;.%\label{eq-stop}
\end{equation*}
We satisfied this by not giving any rules for $\ql{stop$_\ell$}$ and $\qr{stop$_\ell$}$, thus creating deadlocks.
Thus we allowed to reach states that are not normed.
We show that the construction can be amended to avoid the latter.
Assume that $x = w' \sigma' x'$ holds for an $\ell$-counter~$w'$, $\sigma' \in \Omega_{\ell+1}$ and $x'\in\Omega^*$.
Add the following rules:
\begin{align*}
 \ql{stop$_\ell$} \Omega_{\le \ell}^* \Omega_{\ell+1}
         & \btran{\Omega \mapsto aa} \ql{popAll} \\
 \qr{stop$_\ell$} \Omega_{\le \ell}^* \Omega_{\ell+1}  \Omega_{\le \ell}^* \Omega_{\ell+1}
         & \btran{\Omega \mapsto a} \qr{popAll}
\end{align*}
and recall that rules for $\ql{popAll}$ and $\qr{popAll}$ were given previously.
Intuitively, $w'\sigma'$ is popped off the left stack with ``half'' speed, while $w \sigma w' \sigma'$ is popped off the right stack with ``full'' speed.
Afterwards the stacks are of equal height and can be fully erased.
Now \eqref{eq-stop} is satisfied, and these states are normed.
For the latter fact, observe all reachable stacks are of the form
$
 (( \ldots (( \Omega_0^* \Omega_1)^* \Omega_2)^* \ldots )^*\$)^*
$
and $\ql{popAll}$ or $\qr{popAll}$ is always reachable.
The only remaining problem is to guarantee that, whenever we require $\ql{stop$_\ell$} x \sim \qr{stop$_\ell$} w \sigma x$,
$x$ is indeed prefixed by $w' \sigma'$. Note that $\mathsf{stop}_k$ is introduced in the rule for $\mathsf{testTran}$
and other occurrences of $\mathsf{stop}_\ell$ are used in the counter-related rules based on $\mathsf{testDec}_\ell$.
Consequently, if the above requirement is not already satisfied, $w$ must be the ones $\ell$-counter.
To prevent this from happening, for $\mathsf{testTran}$ we shall add an extra ones $k$-counter at the beginning
of the simulations. For $\mathsf{testDec}$, we shall eliminate the need
for decrement tests involving ones counters
by introducing a new symbol $\mathsf{zOnes}$ that will correspond to the predecessor of a ones counter.

This can be achieved by the following modifications:
 \begin{itemize}
   \item Replace $\q{start} \ctran{} \q{ones$_k$ fin} \$$ \ with $\q{start} \ctran{} \q{ones$_k$ start$^0$} \$$ and $\q{start$^0$} \ctran{} \q{ones$_k$ fin} \$$;
    i.e., push one additional ones $k$-counter and~$\$$ in the beginning.
   \item Extend~$B$ by symbols $\textsf{zOnes$_\ell$}$ with $\ell \in [0,k+2]$, whose role is to push an $\ell$-counter~$w$ with
   $\val(w) =\Tow(\ell+1,n)-2$. (This value is one less than the value of a ones $\ell$-counter.)
   \item Replace $\q{ones$_\ell$ \sta}   \ctran{}     \q{ones$_{\ell-1}$  ones$_\ell^1$ \sta} 1_\ell$ with
   \begin{align*}
      & \q{ones$_\ell$ \sta}   \ctran{}     \q{ones$_{\ell-1}$  ones$^0_\ell$ \sta} 1_\ell \quad \text{and add} \\
      & \q{ones$^0_\ell$ \sta}   \ctran{}     \q{zOnes$_{\ell-1}$  ones$_\ell^1$ \sta} 1_\ell\,.
\intertext{Similarly, replace $\q{dec$_\ell$ \sta} \ctran{\Def} \left\{ \q{ones$_{\ell-1}$  dec$_\ell^1$ \sta} \sigma \mid \sigma \in \Omega_\ell \right\}$ with}
      & \q{dec$_\ell$ \sta} \ctran{\Def} \left\{ \q{ones$_{\ell-1}$  dec$_\ell^0$ \sta} \sigma \mid \sigma \in \Omega_\ell \right\} \quad \text{and add} \\
      & \q{dec$^0_\ell$ \sta}  \ctran{\Def} \left\{ \q{zOnes$_{\ell-1}$  dec$_\ell^1$ \sta} \sigma \mid \sigma \in \Omega_\ell \right\}\,.
\intertext{Similarly, replace $\q{zero$_\ell$ \sta} \ctran{}     \q{ones$_{\ell-1}$  zero$_\ell^1$ \sta} 0_\ell$ with}
      & \q{zero$_\ell$ \sta}  \ctran{}     \q{ones$_{\ell-1}$  zero$_\ell^0$ \sta} 0_\ell \quad \text{and add} \\
      & \q{zero$^0_\ell$ \sta} \ctran{}     \q{zOnes$_{\ell-1}$  zero$_\ell^1$ \sta} 0_\ell\,.
   \end{align*}
   \item
    For $\ell \ge 1$ add the following rules:
    \begin{align*}
      & \q{zOnes$_\ell$ \sta} \ctran{}     \q{ones$_{\ell-1}$  zOnes$^0_\ell$ \sta} 1_\ell \\
      & \q{zOnes$^0_\ell$ \sta} \ctran{}     \q{zOnes$_{\ell-1}$  zOnes$^1_\ell$ \sta} 1_\ell \\
      & \q{zOnes$_\ell^1$ \sta} \ctran{\Def} \left\{ \q{dec$_{\ell-1}$ zOnes$_\ell^1$ \sta} 1_\ell ,
                                                   \q{zero$_{\ell-1}$ \sta} 0_\ell \right\}
    \end{align*}
    Finally, add
    $\q{zOnes$_0$ \sta} \ctran{}  \q{\sta} 0_0 1_0^{n-1}$.
 \end{itemize}
By these modifications, whenever we require $\ql{stop$_\ell$} x \sim \qr{stop$_\ell$} w \sigma x$ in our proofs,
 the word $x$ will start with an $\ell$-counter and an $\Omega_{\ell+1}$-symbol.
\qed
\end{proof}

%---------------------
\bibliographystyle{plain} %oder alpha oder splncs
\bibliography{db}
%---------------------

%\iftechrep{
%\newpage
%\appendix
%
%\input{app}
%}
%{}
\end{document}